\newtheorem{theorem}{Theorem}[section]
\newtheorem{lemma}[theorem]{Lemma}
\newtheorem{proposition}[theorem]{Proposition}
\newtheorem{definition}{Definition}
\newcommand{\remove}[1]{}
\renewcommand{\tilde}{\widetilde}
\newproof{proof}{Proof}
\newcommand\nd{\noindent}%\newcommand\bysame{\rule[1mm]{1cm}{.025cm}}
\def\supp{\qopname\relax{no}{supp}}
\def\rank{\qopname\relax{no}{rk}}
\def\shape{\qopname\relax{no}{shape}}
\newcommand\nc\newcommand
\nc\bfa{{\boldsymbol a}}\nc\bfA{{\bf A}}\nc\cA{{\mathcal A}}
\nc\bfb{{\boldsymbol b}}\nc\bfB{{\bf B}}\nc\cB{{\mathcal B}}
\nc\bfc{{\boldsymbol c}}\nc\bfC{{\bf C}}\nc\cC{{\mathcal C}}
\nc\bfd{{\boldsymbol d}}\nc\bfD{{\bf D}}\nc\cD{{\mathcal D}}
\nc\bfe{{\boldsymbol e}}\nc\bfE{{\bf E}}\nc\cE{{\mathcal E}}
\nc\bff{{\boldsymbol f}}\nc\bfF{{\bf F}}\nc\cF{{\mathcal F}}
\nc\bfg{{\boldsymbol g}}\nc\bfG{{\bf G}}\nc\cG{{\mathcal G}}
\nc\bfh{{\boldsymbol h}}\nc\bfH{{\bf H}}\nc\cH{{\mathcal H}}
\nc\bfi{{\boldsymbol i}}\nc\bfI{{\bf I}}\nc\cI{{\mathcal I}}
\nc\bfj{{\boldsymbol j}}\nc\bfJ{{\bf J}}\nc\cJ{{\mathcal J}}
\nc\bfk{{\boldsymbol k}}\nc\bfK{{\bf K}}\nc\cK{{\mathcal K}}
\nc\bfl{{\boldsymbol l}}\nc\bfL{{\bf L}}\nc\cL{{\mathcal L}}
\nc\bfm{{\boldsymbol m}}\nc\bfM{{\bf M}}\nc\cM{{\mathcal M}}
\nc\bfn{{\boldsymbol n}}\nc\bfN{{\bf N}}\nc\cN{{\mathcal N}}
\nc\bfo{{\boldsymbol o}}\nc\bfO{{\bf O}}\nc\cO{{\mathcal O}}
\nc\bfp{{\boldsymbol p}}\nc\bfP{{\bf P}}\nc\cP{{\mathcal P}}
\nc\bfq{{\boldsymbol q}}\nc\bfQ{{\bf Q}}\nc\cQ{{\mathcal Q}}
\nc\bfr{{\boldsymbol r}}\nc\bfR{{\bf R}}\nc\cR{{\mathcal R}}
\nc\bfs{{\boldsymbol s}}\nc\bfS{{\bf S}}\nc\cS{{\mathcal S}}
\nc\bft{{\boldsymbol t}}\nc\bfT{{\bf T}}\nc\cT{{\mathcal T}}
\nc\bfu{{\boldsymbol u}}\nc\bfU{{\bf U}}\nc\cU{{\mathcal U}}
\nc\bfv{{\boldsymbol v}}\nc\bfV{{\bf V}}\nc\cV{{\mathcal V}}
\nc\bfw{{\boldsymbol w}}\nc\bfW{{\bf W}}\nc\cW{{\mathcal W}}
\nc\bfx{{\boldsymbol x}}\nc\bfX{{\bf X}}\nc\cX{{\mathcal X}}
\nc\bfy{{\boldsymbol y}}\nc\bfY{{\bf Y}}\nc\cY{{\mathcal Y}}
\nc\bfz{{\boldsymbol z}}\nc\bfZ{{\bf Z}}\nc\cZ{{\mathcal Z}}
\nc\od{{\bar d}}\nc\ow{{\bar w}}\nc\odelta{{\bar\delta}}
\nc\ox{{\bar x}}\nc\oy{{\bar y}}\nc\ou{{\bar u}}
\nc\oh{{\bar h}}
\newcommand\complexes{{\mathbb C}}
\newcommand\ff{{\mathbb F}}
\newcommand\integers{{\mathbb Z}}
\newcommand\n{{\llbracket n \rrbracket}}
\newcommand\fs[1]{{\llbracket #1 \rrbracket}}
\nc\ellone{{\ell_1}}
\nc\elltwo{{\ell_2}}
\nc\ellinf{{{\ell_\infty}}}
\nc\ideal[1]{\langle #1\rangle}
\newcommand{\beeq}{\begin{eqnarray*}}
\newcommand{\eneq}{\end{eqnarray*}}
\journal{Discrete Mathematics}
\begin{document}

\begin{frontmatter}

%% Title, authors and addresses

%% use the tnoteref command within \title for footnotes;
%% use the tnotetext command for the associated footnote;
%% use the fnref command within \author or \address for footnotes;
%% use the fntext command for the associated footnote;
%% use the corref command within \author for corresponding author footnotes;
%% use the cortext command for the associated footnote;
%% use the ead command for the email address,
%% and the form \ead[url] for the home page:
%%
%% \title{Title\tnoteref{label1}}
%% \tnotetext[label1]{}
%% \author{Name\corref{cor1}\fnref{label2}}
%% \ead{email address}
%% \ead[url]{home page}
%% \fntext[label2]{}
%% \cortext[cor1]{}
%% \address{Address\fnref{label3}}
%% \fntext[label3]{}

\title{Linear codes on posets with extension property}

%% use optional labels to link authors explicitly to addresses:
%% \author[label1,label2]{<author name>}
%% \address[label1]{<address>}
%% \address[label2]{<address>}

\author[A. Barg]{Alexander Barg\fnref{fn1}}
\ead{abarg@umd.edu}

\author[L. Felix,M. Firer]{Luciano V. Felix}
\ead{luvifelix@ufrrj.br}

\author[M. Firer]{Marcelo Firer\fnref{fn2}}
\ead{mfirer@ime.unicamp.br}

\author[M. Firer,M. Spreafico]{Marcos V.P. Spreafico}
\ead{marcos.spreafico@ufms.br}

\fntext[fn1]{Research supported in part by NSA under grant H98230-12-1-0260 and NSF grants
CCF0916919, CCF1217245, CCF1217894, DMS1101697.}
\fntext[fn2]{Research partially supported by FAPESP under grants 2007/ 56052-8 and 2102/20181-7.}

\address[A. Barg]{Dept. of Electrical and Computer Engineering and Institute for Systems Research, University of Maryland,
College Park, MD 20742, USA, and Institute for Problems of Information Transmission, Russian Academy of
Sciences, Moscow, Russia.}

\address[L. Felix]{ICE -- UFRRJ, Universidade Federal Rural do Rio de Janeiro, BR 465, Km 7, 23890-000 -- Serop\'{e}dica -- RJ, Brazil.}

\address[M. Firer]{IMECC -- UNICAMP, Universidade Estadual de Campinas, Rua S{\'e}rgio Buarque de Holanda 651, 13083-859 -- Campinas -- SP, Brazil.}

\address[M. Spreafico]{INMA -- UFMS, Universidade Federal de Mato Grosso do Sul, Av. Costa e Silva, 79070-900 -- Campo Grande -- MS, Brazil.}

\begin{abstract}
We investigate linear and additive codes in partially ordered Hamming-like spaces
that satisfy the extension property, meaning that automorphisms of ideals extend to
automorphisms of the poset. 
The codes are naturally described in terms of translation association schemes
that originate from the groups of linear isometries of the space. We address questions
of duality and invariants of codes, establishing a connection between the dual
association scheme and the scheme defined on the dual poset (they are isomorphic
if and only if the poset is self-dual). We further discuss invariants that play
the role of weight enumerators of codes in the poset case. 
In the case of regular rooted trees such invariants are linked to the 
classical problem of tree isomorphism. We also study the question of whether
these invariants are preserved under standard operations on posets such as the
ordinal sum and the like.

\end{abstract}

\begin{keyword}
Poset codes \sep Association schemes \sep MacWilliams relations 

%% keywords here, in the form: keyword \sep keyword

%% MSC codes here, in the form: \MSC code \sep code
%% or \MSC[2008] code \sep code (2000 is the default)

\end{keyword}

\end{frontmatter}

%%
%% Start line numbering here if you want
%%
% \linenumbers

%% main text
\section{Introduction}
%\label{}

The theory of linear codes is classically developed in the Hamming space over a finite field. Algebraic
aspects of this theory are connected with the theory of association schemes, one of the main thrusts
of which is related to duality theory of schemes and codes. While many results in this framework
extend to additive codes over group alphabets, linear codes continue to be the main object of study.
The focus of this work is poset metric spaces, i.e., finite spaces in which the distance is derived
from partial orders coordinates. Poset metric spaces
were introduced by Brualdi et al.~\citep{bru95} following the work of Niederreiter on one special case of this problem \citep{nie91}. Extension of coding theory to poset metric  spaces has been the subject of numerous publications in  the last decade \citep{skr01,lee03,kim05b,kim05a,bar09b,pan09b}.

%In this paper we address two aspects of the theory of linear codes on posets.
Many basic theorems for linear codes are related to the notion of code duality which itself is
derived from duality of the underlying association schemes. In this paper we are interested in
duality of association schemes that arise from groups of linear isometries of poset metric spaces.
Apart from being a convenient tool for the study of code duality \citep{del73,bro89,mar99,hyu06},
group actions give rise to invariants of linear codes that are used in the study of structural 
and extremal properties of codes. These studies are particularly interesting when the underlying 
association scheme is self-dual. We show that group actions on poset metric spaces give rise
to self-dual schemes if and only if the poset itself is self-dual. This proof relies on
the structure of the isometry group of the poset metric space that was recently established by Panek et al.~\citep{pan09b}.
Another property that we involve is extension of automorphisms from order ideals to the entire poset;
if this is possible, we say that the poset has the extension property.

Examples of posets studied in the literature are mostly confined to the hierarchical poset and
the ordered Hamming space, defined below in the paper. Looking outside this set of examples,
we consider posets whose Hasse diagrams are given by (level-)regular rooted trees. The poset weight is 
preserved under automorphisms, but this invariant is not sufficiently refined to characterize the orbits. Identifying such invariants leads us to the classical problem of
encoding of rooted trees \citep{Reed72} and deciding isomorphism of trees \citep{aho74}. Following
\citep{mar99,bar09b}, we call such invariants shapes of codevectors. Finally, we consider standard operations to build new posets out of given ones, and we analyze the behavior of the extension property under those operations, making explicit the behavior of a shape when the extension property may be ensured. We remark that such operations enable us to construct self-dual posets from non-self-dual ones, extending the study of codes to new classes of posets.

We begin with some definitions and notation. Let $\ff_q^n$ be the $n$-dimensional linear space over $\ff_q.$ 
Let $ \cP(\n,\preceq)$ be a   poset on $\n:=\{1,2,\dots,n\}.$ 
%We will assume that $\cP$ is graded and denote by $l(i)$ the rank of $i\in\n.$ This means that $l(i)=\max \{r:i_0 \prec i_{1} \prec i_{r-1} \prec i_r=i  \}$
A subset $I\subset \n$ is called an {\em ideal} of $ \cP$ if
the relations $i\in I, j\prec i$ imply that $j\in I.$ 
An element $j\in I$ is called {\em maximal} if
there are no elements $i\in I$ such that $j\prec i.$ If $i_1,i_2,\dots$ are the maximal elements
of the ideal $I,$ we say that $I$ is generated by them, in the sense that $I$ is the smallest ideal containing $i_1,i_2,\dots,$ and write $I=\ideal{i_1,i_2,\dots}.$ Denote by
$M(I)$ the set of maximal elements of $I$.
Define the {\em dual poset} $ \cP^\bot$ on $\n$ by setting $i\prec j$ in $\cP^\bot$ whenever
$j\prec i$ in $\cP$. Ideals of $\cP^\bot$ are called {\em filters} of $\cP.$
We denote the set of all ideals and the set of all filters of $\cP$ by $\cI(\cP)$ and  $\cF(\cP),$ respectively.
Given a poset $\cP$ on $\n$, we say that a subset  $\{ i_1,i_2,\dots i_r \}\subseteq \n$ is a {\em chain of length } $r$ if $i_1\prec i_2 \prec \cdots \prec i_r$. Given $i\in\n$, we define the {\em level }  $l(i)$ of $i$ as the number of elements in any maximum-length chain that has $i$ as the maximal element. 

Given a vector $x\in\ff_q^n$ we define the support of $x$ by $\supp(x)=\{i: x_i\ne 0\}.$
Let $\ideal x$ denote the smallest-size ideal $I$ such that $\supp(x)\subseteq I.$
Call $\omega_\cP(x)=|\ideal x|$ the \emph{poset weight} of the vector $x\in \ff_q^n.$ Poset metric spaces
were introduced in \citep{bru95} where it is proved that $\omega_\cP(x)$ is indeed a weight function,
and so $d_\cP(x,y)=\omega_\cP(x-y)$ is a well-defined metric. The corresponding metric space will be denoted
by $X=(\ff_q^n,d_{\cP}).$ A {\em linear poset code} is a linear subspace of $(\ff_q^n,d_{\cP}).$ 

Poset metrics are invariant by translations, which makes them suitable for
studying linear codes, since many of the aspects of the usual theory remain valid
(the minimum distance being equal to the minimum weight, existence of syndrome
decoding schemes, etc.). On the other hand, many metric results that hold for
the Hamming space, can fail for a poset metric.
For instance, the well-known equation $\rho= \lfloor \frac{d-1}%
{2} \rfloor ,$ which relates the minimum distance $d$ of a code with its
packing radius $\rho$, not only is not valid for general posets, but those
quantities may not at all be related: there are linear codes with equal minimum
distance but different packing radii \citep{Lucas13}.

\vspace*{.1in}
{\em Examples:} 1. Let $\cP$ be an antichain on $\n$, i.e., no two elements of $\n$ are comparable. The
metric $d_\cP$ is the familiar Hamming distance of coding theory.

\vspace*{.05in}  2. Let $\cP$ be a linear order (a chain), i.e., $1\prec 2\prec\dots\prec n.$ The distance induced on $\n$ by $ \cP$ has been studied in \citep{gut98,skr01,pinheiro12}, while \citep{FPM} gave a complete classification of linear codes in $(F_q^n,\cP)$, establishing a canonical form for each class of such codes. 

\vspace*{.05in} 3. Suppose that $\n$ is a disjoint union $\n=H_0\cup H_1\cup\dots \cup H_m$ with the relation given by $i\prec j$ if and only if $i\in H_s,j\in H_t$ and $s<t.$ This order defines a
{\em hierarchical poset} $\cP$ on $\n$ which includes the above two examples as particular cases.
The metric space $(\ff_q^n,d_{\cP})$ turns out to be the only instance of poset metrics for
which the weight distribution of a linear code $\cC$ is determined by the weight
distribution of its dual code $\cC^\bot$ \citep{kim05a,kim05b,pinheiro12}.

\vspace*{.05in} 4.  Let $n=mr$ and let $\n$ be a disjoint union of $m$ chains of length $r$. This example, which also
includes the first two ones, is actually the first poset distance beyond the Hamming metric
to be studied in combinatorics, see Niederreiter \citep{nie91} and Rosenbloom and Tsfasman \citep{ros97}.  
The arising metric space is called the ordered Hamming space or the NRT space. 
It finds applications in numerical analysis \citep{nie04,skr01} and coding theory \citep{nie01b}.
Combinatorial structure of the NRT space was studied in detail in \citep{mar99,bar09b,Alves2011}.

\vspace*{.05in} 5. Let $n=1+d_0+d_0d_1+\dots+d_0d_1\dots d_{m-2}$ and let $\n$ be a level-regular rooted tree
in which every vertex $i$ with $l(i)=s$ has $d_s$ sons, $s=0,1,\dots,m-2$. Define $\cP$ by $i\prec j$ iff the vertex $i$ lies
on a path from the root to the vertex $j$. We will explore this example in detail in the next section (see Proposition \ref{prop:extension}), using a labelling of the vertices that is different from the encoding
of trees considered in earlier literature beginning with \citep{Reed72}: there, each vertex is 
labelled individually with the purpose of arranging the vertices in a linear order, while we are concerned with the order $\cP$.

Note also that encoding the nodes of a tree
is related to the problem of genetic testing for hereditary diseases:
namely, if an individual is found to be positive for a condition, this discovery supersedes the results of testing of
his ancestors (which could have missed the presence of the same condition). If the family
is represented by the ancestry tree, with each node labeled by 1/0 according as the 
individual tests positive or negative for this condition, then the metric on the family codewords is 
exactly the poset metric on a tree.

\vspace*{.05in} {\em Poset metric automorphisms.} 

A poset automorphism is a permutation $\phi:\n\to\n$ such that $x\preceq y$ if and only if 
$\phi(x)\preceq\phi(y).$
Let $S_\cP$ be the automorphism group of the poset $(\n,\cP)$.
For instance, for the Hamming space $S_\cP={\mathfrak S}_n$ (the symmetric group on $n$ elements) 
while for a single chain, $S_\cP=\{{\text {id}}\}.$ A poset $\cP$ is called {\em self-dual} if it is isomorphic to its dual $\cP^\bot,$ i.e., 
if there exists a 
permutation on $\n$ such that if $i\preceq_\cP j$ then $\pi(j)\preceq_{\cP^\bot}\!\!\pi(i).$
To give an example, let us represent $\cP$ by its {\em Hasse diagram} in which vertices correspond to
elements of $\n$ and there is an edge connecting vertices $i$ and $j$ if and only if $i\prec j,$ and  
$i\prec j' \preceq j$ implies $j' =j$.%\vspace*{-.2in}

\setlength{\unitlength}{0.00040in}
\begin{center}{\begingroup\makeatletter\ifx\SetFigFont\undefined%
\gdef\SetFigFont#1#2#3#4#5{%
  \reset@font\fontsize{#1}{#2pt}%
  \fontfamily{#3}\fontseries{#4}\fontshape{#5}%
  \selectfont}%
\fi\endgroup%
{\renewcommand{\dashlinestretch}{30}
\hspace*{.75in}\begin{picture}(1141,1409)(0,-10)
\put(83,1327){\blacken\ellipse{120}{120}}
\put(83,1327){\ellipse{120}{120}}
\put(68,67){\blacken\ellipse{120}{120}}
\put(68,67){\ellipse{120}{120}}
\put(1043,82){\blacken\ellipse{120}{120}}
\put(1043,82){\ellipse{120}{120}}
\put(1073,1327){\blacken\ellipse{120}{120}}
\put(1073,1327){\ellipse{120}{120}}
\path(83,52)(83,1327)(1058,52)(1058,1327)
\end{picture}\hspace*{.5in}{\renewcommand{\dashlinestretch}{30}
\begin{picture}(1471,1439)(0,-10)
\put(803,67){\blacken\ellipse{120}{120}}
\put(803,67){\ellipse{120}{120}}
\put(68,1327){\blacken\ellipse{120}{120}}
\put(68,1327){\ellipse{120}{120}}
\put(1403,1357){\blacken\ellipse{120}{120}}
\put(1403,1357){\ellipse{120}{120}}
\path(83,1297)(803,37)(1388,1327)
\end{picture}}}}
\end{center}
Of the two posets in this figure the left is self-dual
while the right is not. The NRT poset is self-dual, including
the case of the single chain. 

\vspace*{.05in}
Since we are interested in linear codes, we concentrate on linear automorphisms of poset metric spaces.
The group of linear isometries of $X=(\ff_q^n,d_\cP),$ denoted $GL_\cP(n),$ is formed
of linear operators $T:X\to X$ such that  $d_\cP(T(x),T(y))=d_\cP(x,y)$ for all $x,y\in X.$ 
The group $GL_\cP(n)$ was characterized in \citep{pan09b}.
It can be constructed
as a semidirect product $G_{\cP}\ltimes Aut(  \cP)  $ where
$G_{\cP}$ is the set of $n\times n~$\ matrices $A=(  a_{ij})  $ over $\ff_q$ such that
 \begin{equation}\label{eq:group}
a_{ii}   \neq0, i=1,\dots, n;\;\;
a_{ij}=0\text{ if }i>j;\;\;
a_{ij}=0\text{ if }i<j\text{ and }i\notin\langle j\rangle
\text{.}%
  \end{equation}
In particular, for the Hamming space, $GL_\cP(n)= (\ff_q^\ast)^n \ltimes S_n $, for the linear order we have
$GL_\cP(n)=M_n,$ the group 
of upper-triangular matrices with nonzero main diagonal, 
and for the NRT space, $GL_\cP(n)=(M_r)^m \ltimes S_m$ (the last result is due to \citep{lee03}).

\vspace*{.1in}{\sc Nomenclature:}\\
$\n=\{1,2,\dots,n\}$\\
$\cP$ -- poset on $\n$\\
$\omega_\cP, d_{\cP}$ -- poset weight, poset distance\\
$\cI(\cP)$ the set of ideals of $\cP$, $\cF(\cP)$ the set of filters of $\cP$\\
$X=(\ff_{q}^{n},d_{\cP})$ -- poset metric space\\
$\ideal{a_1,a_2,\dots}$ -- ideal generated by $a_1,a_2,\dots\in\n$\\
$\supp(x)=\{i: x_i\ne 0; x\in \ff_q^n\}$\\
$\ideal x \triangleq\ideal{\supp(x)}$\\
$M(I)$ -- set of maximal elements of the ideal $I$\\
$\cP^\bot$ -- dual poset of $\cP$\\
$GL_{\cP}(n)$ -- group of linear isometries of $X$\\
$\cX=X/\sim$ -- the set of orbits of $GL_{\cP}(n)$\\
$G_{\cP}$ -- subgroup of $GL_{\cP}(n)$ that fixes the poset\\
$\tilde I$ -- orbit of $I$ under $Aut(\cP)$\\
$l:\cP\to\integers_+\cup\{0\}$ -- level (or rank) function for finite posets.

\section{Extension property}

Two ideals $I,J\in \cI(\cP)$ are called isomorphic, denoted $I\sim J,$ if there is
a bijection $g:I\to J$ that preserves the order. 
Given a poset $\cP=(\n,\preceq)$, an ideal $I\in \cI(\cP),$ and a poset isomorphism $\sigma\in Aut(\cP),  $
it is clear that $\sigma (  I ) \sim I$ because $\sigma|_I$ is a
poset isomorphism. However, the converse is not always true: given two isomorphic ideals 
$I,J\in\cI(\cP)$ there does not always exist an automorphism $\sigma\in Aut (  \cP )$ 
such that $\sigma(I)  =J$. The simplest
example of this situation is the poset on $ \{  1,2,3 \}  $
determined by the relation $1\prec3;$ then $ \{
1 \}  $ and $ \{  2 \}  $ are isomorphic ideals, but $Aut (
 \cP )=\{\text{id}\}.$ 

Given a poset $\cP$ and an ideal $I\in\cI(\cP)$, we denote
by $\tilde{I}$ the set of ideals that are isomorphic to $I$:
   $$
\tilde{I}= \{  J\in\mathcal{I} (  \cP )  ;I\sim J \}.
  $$
We remark that $\sim$
is an equivalence relation on $\mathcal{I} (  \cP )  $. In a similar
way, given $I\in\mathcal{F} (\cP )  $, we define
   $$
\tilde  I^\bot= \{  J\in\mathcal{F} (   \cP )  ;I\sim
J \}  \text{.}%
   $$

\begin{definition}
We say that a poset $ \cP= (     \n  ,\preceq )  $ has the
\emph{ideal-extension (IE) property }if, for every $I,J\in\mathcal{I} (
 \cP )  $, if $I$ and $J$ are isomorphic, there exists $\sigma\in Aut (
 \cP )  $ such that $\sigma (  I )  =J$. We say that $ \cP= (
 \n  ,\preceq )  $ has the \emph{filter-extension (FE)
property }if the same holds true when ideals are replaced with filters.
\end{definition}
A different and much stronger extension property (where the two sets $I,J\subseteq [n]$ need not to be  ideals) was previously studied in the infinite case when such posets are
called homogeneous \citep{Schmerl79}, leading to classification of all such posets.

We say that $(\ff_q^n ,d_{\cP})  $ has the \emph{orbits determined by ideals }(the
$\tilde {I}$-property)  if for any $x\,,y\in
\mathbb{F}_{q}^{n}$ there is $T\in GL_{ \cP} (  n )  $ such that
$T(x)=y$ if and only if $\ideal x\sim\ideal y.$
When the orbits are determined by filters, we say that $ (
\mathbb{F}_{q}^{n},d_{ \cP} )  $ has $\tilde  I^\bot$-property.

In the next proposition we show that the $\tilde {I}$-property (a property
of the vector space) and the IE-property (a property on the poset) are
essentially the same:

\begin{proposition}\label{prop:extension}
A poset $\cP$ has the $IE$-property iff $(\ff_q^n,d_{\cP})$ has the $\tilde I$-property.
A poset $\cP$ has the $FE$-property iff $(\ff_q^n,d_{\cP})$ has the $\tilde  I^\bot$-property.
\end{proposition}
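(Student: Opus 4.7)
The plan is to decompose any isometry $T\in GL_\cP(n)=G_\cP\ltimes Aut(\cP)$ as a product $T=g\sigma$ per (\ref{eq:group}) and reduce the proposition to a transitivity lemma for $G_\cP$. The $Aut(\cP)$ factor will then match the poset extension property, while $G_\cP$ moves vectors around inside a fixed support ideal.

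The central technical step is the lemma that $G_\cP$ acts transitively on $\{v\in\ff_q^n:\ideal v=I\}$ for each $I\in\cI(\cP)$. To prove it, given $v,w$ with $\ideal v=\ideal w=I$, I would exhibit $g=(a_{ij})\in G_\cP$ with $gv=w$ directly. For $i\in M(I)$ one has $v_i,w_i\neq 0$ (otherwise $I\setminus\{i\}$ would be a smaller ideal containing $\supp v$), so take $a_{ii}=w_i/v_i$. For $i\in I\setminus M(I)$, since $M(I)$ generates $I$ there is some $j^\ast\in M(I)$ with $i\prec j^\ast$, hence $i<j^\ast$ under the linear extension implicit in (\ref{eq:group}), so the entry $a_{ij^\ast}$ is allowed to be nonzero by (\ref{eq:group}) and a single scalar choice makes the $i$th coordinate of $gv$ equal to $w_i$. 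Setting $a_{ii}=1$ for $i\notin M(I)$ and all other off-diagonal entries to zero, a short case check over $i\in M(I)$, $i\in I\setminus M(I)$, and $i\notin I$ confirms $gv=w$. An immediate corollary is $\ideal{gv}=\ideal v$ for every $g\in G_\cP$, and hence $\ideal{T(x)}=\sigma(\ideal x)$ for every $T=g\sigma\in GL_\cP(n)$.

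Granted the lemma, the equivalence IE $\Leftrightarrow \tilde I$-property follows readily. For IE $\Rightarrow\tilde I$: if $\ideal x\sim\ideal y$, IE supplies $\sigma\in Aut(\cP)$ with $\sigma(\ideal x)=\ideal y=\ideal{\sigma(x)}$, and the lemma supplies $g\in G_\cP$ with $g\sigma(x)=y$, so $T=g\sigma$ witnesses the $\tilde I$-property; the converse implication inside $\tilde I$ is immediate from $\ideal{T(x)}=\sigma(\ideal x)$. For $\tilde I\Rightarrow$ IE: given $I\sim J$ in $\cI(\cP)$, I would apply the $\tilde I$-property to the canonical pair $x=\sum_{i\in M(I)}e_i$ and $y=\sum_{j\in M(J)}e_j$, which satisfy $\ideal x=I$ and $\ideal y=J$; any witness $T=g\sigma$ then forces $\sigma(I)=J$ by the same identity.

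The second equivalence FE $\Leftrightarrow\tilde I^\bot$-property follows by applying the first to the dual poset $\cP^\bot$, since filters of $\cP$ are exactly ideals of $\cP^\bot$ and the $\tilde I^\bot$-property of $(\ff_q^n,d_\cP)$ is the $\tilde I$-property of $(\ff_q^n,d_{\cP^\bot})$. I expect the transitivity lemma to be the main obstacle, since one must realize the required correction within the rigid shape of (\ref{eq:group}); the key enabling observation is that every non-maximal element of an ideal $I$ lies below some element of $M(I)$, which in a linear extension provides a strictly larger index with the required comparability, guaranteeing an admissible off-diagonal entry.
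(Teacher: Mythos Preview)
Your proof is correct and takes essentially the same route as the paper: both exploit the semidirect product $GL_\cP(n)=G_\cP\ltimes Aut(\cP)$, using $G_\cP$ to act transitively on $\{v:\ideal v=I\}$ (the paper does this by normalizing each vector to a canonical $\hat x$ with support $M(I)$ and all entries $1$ via two explicit matrices $A,B\in G_\cP$, while you build a single matrix sending $v$ directly to $w$) and invoking the IE property to supply the $Aut(\cP)$ factor. Your treatment is in fact more complete than the paper's written proof, which only spells out the implication IE $\Rightarrow$ $\tilde I$ and leaves both the converse $\tilde I\Rightarrow$ IE and the FE/$\tilde I^\bot$ statement implicit; one small quibble is that $\ideal{gv}=\ideal v$ is not literally a corollary of your transitivity lemma but a separate (and equally easy) consequence of the shape~\eqref{eq:group} of $G_\cP$.
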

\begin{proof} Assume that $\cP$ has the $IE$-property. Let $(e_{i})$ be the standard basis of $\ff_q^n.$
As shown in \citep[Theorem 1]{pan09b}, given $T\in GL_{\cP}(  n)  $, the
map $\phi_{T}:\n\to\n $ defined by
  $$
\phi_{T}(i)  =M(\langle T(e_{i})\rangle),
  $$
is a poset automorphism,   So, given $T\in GL_{\cP}(n)$ such that 
$T(x)  =y$, we have that $\phi_{T}$ is a poset automorphism and clearly
$\phi_{T}(  \ideal x )  =\ideal y,$ so that $\langle x\rangle \sim\langle
y\rangle $. This establishes the only if part.

Suppose now that $\langle x\rangle \sim\langle y\rangle$. 
Since $\cP$ satisfies the extension property, there is $\phi\in Aut(
\cP)  $ such that $\phi(\langle x\rangle )=\langle y\rangle $. Let 
$T_{\phi}:\mathbb{F}_{q}^{n}\to \mathbb{F}_{q}^{n}$ be defined by $T_{\phi}(  x_{1}
,...,x_{n})  =(  x_{\phi(  1)  },...,x_{\phi(
n)  })  $. Clearly, $T_\phi\in  GL_{\cP}(  n)  $. 

By abuse of notation, we write $M(x)$ to refer to the set of maximal elements of $\ideal x.$
Given $x\in\mathbb{F}_{q}^{n}$,
denote by $\hat x=(\hat x_1,\dots,\hat x_n)$  the vector that satisfies the following conditions: 
\emph{(i) }$M(x)  =M( \hat x)  $, 
\emph{(ii) }$\supp(\hat x)  =M(x)  $, 
\emph{(iii)} If $i\in\supp(\hat x)$ then $\hat x_{i}=1$. 

Consider a matrix $A=(a_{ij}) \in G_{\cP}$ such that $a_{ii}=x_{i}^{-1}$ if $i\in M(x)  $, 
$a_{ii}=1$ if $i\notin M(x)  $ 
and $a_{ij}=0$ for $i\neq j$. 
Let $Ax=(  x_{1}^{\prime},...,x_{n}^{\prime}),$ then
$x_{i}^{\prime}=1$ if $i\in M(  x)  $ and $x_{i}^{\prime}=x_{i}$ otherwise. 
Now consider the matrix $B=(
b_{ij})  \in G_{\cP}$ defined as follows:%
\begin{align*}
b_{ij}  &  =1 {\mbox{ if } i=j}\\ 
b_{ij}  &  = {  - x_{i}^{-1}\text{ if }x_{i}\neq 0\text{ and }%
j=\max \left\{k\in M(x); i \prec k\right\}   } \\
b_{ij}  &  =0\text{ otherwise,}%
\end{align*}
{where $\max$ refers to the usual order $\leq$ of the natural numbers.}

Let $T_x:=BA$. By construction we have that $B\in G_{\cP}$ and $BAx=\hat x.$ Concluding,
 we have that $T=T_{y}^{-1}\circ T_{\phi}\circ T_{x}$ is a linear
isometry that satisfies $T(x)  =y$. This completes the proof.
\end{proof}

\textbf{Remark.} The IE property does not necessarily imply the FE property. Let $\cP$ be a binary regular rooted tree with vertices $\{1,2,...,7\}$ labelled so that $1\prec 2, 3; 2\prec 4,5$ and $3\prec6,7$.  In the next proposition we shall prove that it has the IE property. However, $I = \{4,5\}$ and $J = \{5,6\}$ are two isomorphic filters  but there is no $\sigma \in Aut(\cP)$ such that $\sigma(I) = J.$ 

\vspace*{.05in} Posets in the five examples given above satisfy the IE property.
For hierarchical posets, linear isometries act transitively on spheres of a 
fixed radius around zero. At the same time, for hierarchical posets, ideals are isomorphic if
and only if they have the same cardinality, so that orbits of linear isometries are determined by the weight.
Therefore, the IE property is satisfied.

In contrast, the NRT posets also satisfy the IE property, but the cardinality of an ideal is not sufficient to characterize it, or equivalently, orbits of linear isometries are not determined by the weight. To do so, it is convenient to introduce
a new invariant, \emph{shapes of ideals} and observe that linear isometries act transitively on vectors
of the same shape (more on this below). 

We now show that the extension property also holds for level-regular rooted trees.

\begin{proposition}\label{prop:lt}
 Level-regular rooted tree posets possess the IE  property.
\end{proposition}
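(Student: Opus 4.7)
The plan is to proceed by induction on the number of levels $m$ of the tree, exploiting the recursive structure of a level-regular rooted tree: after removing the root $r$, the remaining forest consists of $d_0$ disjoint copies $T_1,\dots,T_{d_0}$ of a level-regular rooted tree with parameters $d_1,\dots,d_{m-2}$, rooted at the respective children $v_1,\dots,v_{d_0}$ of $r$. Any ideal $I$ of $\cP$ is either empty or contains $r$, and in the latter case decomposes uniquely as $I=\{r\}\cup\bigcup_{i=1}^{d_0}I_i$, where $I_i=I\cap T_i$ is an ideal of $T_i$.

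The base case $m=1$ is immediate. For the inductive step, assume $I,J\in\cI(\cP)$ are isomorphic and nonempty. Any order isomorphism $\varphi:I\to J$ sends the unique minimum of $I$ to the unique minimum of $J$, so $\varphi(r)=r$; then the atoms of $I$ above $r$, namely $\{v_i:I_i\neq\emptyset\}$, are mapped bijectively onto the corresponding atoms of $J$. This produces an injection $\tau$ from $\{i:I_i\neq\emptyset\}$ into $\{1,\dots,d_0\}$ with $\varphi(v_i)=v_{\tau(i)}$, and the restriction of $\varphi$ to the portion of $I$ above $v_i$ gives an order isomorphism $I_i\cong J_{\tau(i)}$. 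Since both domains have the same cardinality, extend $\tau$ arbitrarily to a permutation in $S_{d_0}$.

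By the inductive hypothesis applied to $T_{\tau(i)}$, for each $i$ with $I_i\neq\emptyset$ one can build a rooted-tree isomorphism $\sigma_i:T_i\to T_{\tau(i)}$ with $\sigma_i(I_i)=J_{\tau(i)}$: first transport $I_i$ through any fixed rooted-tree isomorphism $T_i\to T_{\tau(i)}$, then apply the IE property on $T_{\tau(i)}$ to realign the transported ideal with $J_{\tau(i)}$. For the remaining indices pick any rooted-tree isomorphism $\sigma_i:T_i\to T_{\tau(i)}$. Define $\sigma:\n\to\n$ by $\sigma(r)=r$ and $\sigma|_{T_i}=\sigma_i$. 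Because $\tau$ is a permutation this $\sigma$ is a bijection; because vertices in distinct $T_i$'s are $\cP$-incomparable and each $\sigma_i$ preserves the internal order, $\sigma\in Aut(\cP)$; and by construction $\sigma(I)=\{r\}\cup\bigcup_i\sigma_i(I_i)=\{r\}\cup\bigcup_iJ_{\tau(i)}=J$.

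The only conceptual step requiring care is the decomposition of the order isomorphism $\varphi$ into piecewise isomorphisms $I_i\to J_{\tau(i)}$ compatible with a single permutation of children of $r$; after that, reassembly is routine, relying on the fact that for a level-regular tree the children of each internal vertex can be permuted freely by elements of $Aut(\cP)$. I do not anticipate a serious obstacle.
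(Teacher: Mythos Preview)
Your inductive argument is correct and complete: the decomposition of a nonempty ideal as $\{r\}\cup\bigcup_i I_i$, the fact that an order isomorphism $\varphi:I\to J$ must fix $r$ and permute the atoms (hence restrict to isomorphisms $I_i\to J_{\tau(i)}$), and the reassembly of the subtree isomorphisms into a global automorphism all go through as you describe. The one point worth stating explicitly is that since $\tau$ maps $\{i:I_i\neq\emptyset\}$ bijectively onto $\{j:J_j\neq\emptyset\}$, any extension of $\tau$ to a permutation of $\{1,\dots,d_0\}$ automatically sends empty components to empty components, so $\sigma(I)=J$ holds on the nose.

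Your route differs from the paper's. The paper does not induct on height; instead it introduces an explicit string labeling $\lambda(a)=\alpha_1\cdots\alpha_j$ of the vertices, consistent with the order, and then constructs the extension $\phi^\ast$ directly by specifying how the label of each vertex outside $I$ transforms: one keeps $\phi$ on $I$, and for $a\notin I$ one locates the ``meet'' $a_I$ of $a$ with $I$, applies a bijection $\gamma_{a_I}$ to the first coordinate beyond $a_I$, and leaves the remaining tail of the label unchanged. This buys a slightly stronger conclusion than the bare IE property, namely $\phi^\ast|_I=\phi$ (the given isomorphism itself extends), and the labeling apparatus is reused later in the paper when shapes on trees are linked to the classical tree-encoding problem. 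Your approach is cleaner and more conceptual for the proposition as stated; it could also be upgraded to yield $\sigma|_I=\varphi$ by strengthening the induction hypothesis accordingly.
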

\begin{proof} 
Let $l(\cdot)$ be the rank function associated with the natural grading of $\cP.$
Let $r$ be the height of the tree, i.e., $r=\max\{l(i):i\in \n \}$ and suppose each element of rank $i<r$ has $d_i$ descendants. 
Introduce a labeling of the vertices that associates a string of integers with a vertex $a\in\n.$ Namely, if $l(a)=j\ge 1,$ then
the label $\lambda(a)=\alpha_1\alpha_2\cdots \alpha_j,$ where $\alpha_m\in \{0,1,\cdots d_m-1\}, m=1,\dots,j$. By definition $\lambda(\text{root})=\emptyset.$ The labeling is assigned 
in such a way that two elements $a,b\in \n$ with $\lambda(a)=\alpha_1\alpha_2\cdots \alpha_j$ and 
$\lambda(b)=\beta_1\beta_2\cdots \beta_k$  satisfy $a\preceq b$ iff $j\leq k$ and $\alpha_i=\beta_i$ for all $i=1,2,\dots ,j.$
In this case we can write $\lambda(b)=(\lambda(a)|\beta_{j+1}\cdots \beta_k).$  A labeling with such property is said to be \emph{consistent with the order} $\cP$. By abuse of notation, below we sometimes use labels to refer to vertices.

Let $I,J\in\cI(\cP)$ be two isomorphic ideals, and let $\phi$ be the corresponding isomorphism.
We are going to construct $\phi^\ast\in Aut(\cP)$ such that the restriction $\phi^\ast|_I=\phi.$ For $a\in I$ set $\phi^\ast(a)=\phi(a).$
Given $a\in\n\backslash I,$ consider the chain from the root to $a$. This chain is unique and intersects $I$ because $\cP$ 
is a rooted tree. Let $a_I$ be the last vertex in this chain that is in $I$ (the ``meet" of $a$ and $I$). Thus we have $a_I=b_0\prec b_1\prec\dots\prec b_{l(a)-l(a_I)}=a$ for some vertices $b_1,\dots,b_{l(a)-l(a_I)-1}.$

By our construction, the label of each $b_l, l\ge 1$ is obtained by concatenating the label
$\lambda(a_I)$ with a tail formed of $l$ letters $\beta_1,\dots,\beta_{l},$ where 
$\beta_j\in \{0,1,\dots,d_{l(a_I)+j}-1\},j=1,\dots,l.$
For $a\in I$ define the set of descendents of $a$ not contained in $I$: 
  $$
\Lambda_{a,I}= \{0 \leq j \leq d_{l(a)}-1: \; (\lambda(a)|j) \notin I\}
 $$
(this set can be empty).
Since $I$ and $J$ are isomorphic, and since $\cP$ is level-regular, we have
\[  
| \Lambda _{a,I} | = | \Lambda _{\phi (a),J}| , \quad a\in I,
\]
so for each $a\in I$ there is a bijection $\gamma_a:   \{0,1,\dots,d_{l(a)}-1\} \rightarrow 
\{0,1,\dots,d_{l(\phi (a))}-1\}$ such that $\gamma_a \left( \Lambda _{a,I} \right) = \Lambda _{\phi (a),J}$ and $(\lambda (\phi (a)) |\gamma_a(j)) = \lambda (\phi (\lambda (a)|j)) $ for $j\in\left\{ 0,1,\dots, d_{l(a)} - 1 \right\} \backslash \Lambda_{a,I}$. In other words, $\gamma_a$ induces the same map as $\phi$ when restricted to the immediate descendants of $a$ in $I$. 
%and of $\phi (a)$ in $J$, restricted to the level $l(a)+1$.  

Now we are able to define the isomorphism $\phi^\ast$. Given $a\in \n$, consider its label 
        $$
       \lambda (a)=(\lambda (a_I)| \beta_{l(a_I)+1},\cdots ,\beta_{l(a)})
     $$ 
and define $\phi^{\ast }(a)$ to be the vertex labeled as 
    $$ 
\lambda (\phi^{\ast }(a)) = (\lambda (\phi (a_I))|\gamma_a(\beta_{l(a_I)+1}), \beta_{l(a_I)+2}, \dots \beta_{l(a)} ).
    $$
Since $\gamma_a$ is a bijection, and since $\cP$ is level-regular, $\phi^\ast$ is well defined. It is a bijection that
preserves the order because the labeling is consistent with $\cP.$ Therefore, it is an order isomorphism which also satisfies 
$\phi^\ast|_I=\phi.$ Thus the proof is complete.
\remove{Considering that \textit{(i)}  $\lambda_a$ is a bijection between $\Lambda _{a,I}$ and $\Lambda _{\phi (a),J}$ for each $a\in I$; \textit{(ii)} $\cP$ is level--regular and \textit{(iii)} the labeling $\lambda $ is consistent with the order $\cP$, we find that $\phi^{\ast }$ is an order isomorphism that extends $\phi$. Indeed, $\phi^{\ast }$ is well defined by statements \textit{(i)} and \textit{(ii)}, it is a bijection by statement \textit{(ii)} and it preserves the order by \textit{(iii)}. By construction we have that ${\phi^\ast|}_I=\phi$. 
}
\end{proof}

\subsection{Remarks on lattices}

A poset $\cP=\left(\n,\prec\right)$ is called a 
\emph{meet semilattice} if for any $x,y\in n$ there is a unique greatest lower bound $z$ of $x$ and $y$.  
We write $z=x\wedge y$ and call it the meet of $x$ and $y$. 
Semilattices admit a natural grading, and we denote by $X_i, i=0,1,\dots ,m$ its fibers, i.e., the sets of points of $\cP$ of the same rank.
A semilattice $ \cP=\left(  X,\prec \right)  $ is  said to be \emph{regular} if the following conditions
are satisfied:

\begin{enumerate}
\item Given $y\in X_{m},z\in X_{r}$ with $z\preceq y$ the number of points
$u\in X_{s}$ such that $z\preceq u\preceq y$ is a constant $\mu\left(
r,s\right)  $;

\item Given $u\in X_{s}$, the number of points $z\in X_{r}$ such that
$z\preceq u$ is a constant $\nu\left(  r,s\right)  $;

\item Given $a\in X_{r},y\in X_{m}$ with $a\wedge y\in X_{j}$, the number of
pairs $\left(  b,z\right)  \in X_{s}\times X_{m}$ such that $b\preceq
z,a\preceq z$ is a constant $\pi\left(  j,r,s\right)  $.
\end{enumerate}
Regular semilattices were introduced by Delsarte \citep{del76}; see also \citep{cec08}, Ch.8.
It is straightforward to show that level-regular rooted trees are regular semilattices and the family of semilattices 
seems to be a fertile ground for posets satisfying the IE property. 
The following example shows that not every regular semilattice satisfies the IE property.

\remove{\textbf{Counterexample:}

Consider the poset $\cP$ over  $\llbracket 10 \rrbracket$ determined by the relations:
\begin{itemize}
\item$0\prec i$ for $i=1,2,3,4$ ;
\item $1\prec 8, 1\prec 5$ and  $i\prec i+3,i+4$ for $i=2,3,4$ ;
\item $i\prec 9$ for $i=5,6,7,8$.
\end{itemize}
}

\vspace*{.05in}\setlength{\unitlength}{0.00045in}
\begin{center}
\begingroup\makeatletter\ifx\SetFigFont\undefined%
\gdef\SetFigFont#1#2#3#4#5{%
  \reset@font\fontsize{#1}{#2pt}%
  \fontfamily{#3}\fontseries{#4}\fontshape{#5}%
  \selectfont}%
\fi\endgroup%
{\renewcommand{\dashlinestretch}{30}
\begin{picture}(3263,4000)(0,-10)
\put(285,2595){\circle*{120}}
\put(1185,2575){\circle*{120}}
\put(2055,2575){\circle*{120}}
\put(2965,2585){\circle*{120}}
\put(285,1415){\circle*{120}}
\put(1185,1395){\circle*{120}}
\put(2055,1395){\circle*{120}}
\put(2965,1405){\circle*{120}}
\put(1615,335){\circle*{120}}
\put(1625,3655){\circle*{120}}
\drawline(1615,335)(275,1415)(275,2635)(1615,3655)
\drawline(1625,315)(1185,1415)(265,2615)
\drawline(1625,345)(2055,1405)(1185,2585)(1635,3685)
\drawline(1615,335)(2985,1415)(2045,2595)(1625,3645)
\drawline(295,1415)(2985,2595)(1625,3655)
\drawline(1185,1415)(1185,2585)
\drawline(2045,1405)(2055,2515)
\drawline(2975,1405)(2975,2585)
\put(15,2535){\makebox(0,0)[lb]{\smash{{\SetFigFont{10}{16.8}{\rmdefault}{\mddefault}{\updefault}6}}}}
\put(975,2525){\makebox(0,0)[lb]{\smash{{\SetFigFont{10}{16.8}{\rmdefault}{\mddefault}{\updefault}7}}}}
\put(1515,3805){\makebox(0,0)[lb]{\smash{{\SetFigFont{10}{16.8}{\rmdefault}{\mddefault}{\updefault}10}}}}
\put(2195,2505){\makebox(0,0)[lb]{\smash{{\SetFigFont{10}{16.8}{\rmdefault}{\mddefault}{\updefault}8}}}}
\put(3105,2525){\makebox(0,0)[lb]{\smash{{\SetFigFont{10}{16.8}{\rmdefault}{\mddefault}{\updefault}9}}}}
\put(15,1325){\makebox(0,0)[lb]{\smash{{\SetFigFont{10}{16.8}{\rmdefault}{\mddefault}{\updefault}2}}}}
\put(1315,1305){\makebox(0,0)[lb]{\smash{{\SetFigFont{10}{16.8}{\rmdefault}{\mddefault}{\updefault}3}}}}
\put(2155,1325){\makebox(0,0)[lb]{\smash{{\SetFigFont{10}{16.8}{\rmdefault}{\mddefault}{\updefault}4}}}}
\put(3125,1355){\makebox(0,0)[lb]{\smash{{\SetFigFont{10}{16.8}{\rmdefault}{\mddefault}{\updefault}5}}}}
\put(1555,15){\makebox(0,0)[lb]{\smash{{\SetFigFont{10}{16.8}{\rmdefault}{\mddefault}{\updefault}1}}}}
\end{picture}
}
\end{center}

Direct verification shows that $\cP$ is a regular meet semilattice. However, $I=\left\{1,2,3\right\}$ and $J=\left\{1,2,4\right\}$ 
are isomorphic ideals but there is no isomorphism of $\cP$ that maps $I$ into $J$, since $2$ and $3$ are covered by $6,$ but no element in level $2$ covers $2$ and $4$. This poset is also self-dual, and therefore forms a lattice.

For $a,b\in X$ define the set
  $$
   a \vee b=\{x\in X: a\preceq x, b\preceq x \text{ and }
   ( a\prec y\preceq x) \Rightarrow (y=x); \; (b\prec y\preceq x) \Rightarrow (y=x)\}.
  $$
We say that the semilattice $\cP$ is \emph{strongly regular} if it is regular and satisfies the following additional conditions:
\begin{enumerate}
\item[(r1)] For $z \in X_r, \left|\left\{u\in X_s;z\preceq u\right\}\right|$ is a constant $\bar{\nu}(r,s)$;
\item[(r2)] Given a pair of vertices $a,b\in X_r$, if $a\vee b\ne\emptyset,$ then the
quantity $\left|\left\{u\in X_s; a\vee b \preceq u\right\}\right|$ is a constant $\rho (r,s)$ 
that does not depend on $a$ and $b.$
\end{enumerate}

It is not difficult to see that a level-regular rooted tree is a strongly regular semilattice, while the 
lattice in the previous example is not (indeed, let $r=1$ and $s=2,$ then the pairs $(2,3)$ and $(2,4)$ give a counterexample
to condition (r2)). We conjecture that strongly regular semilattices satisfy the IE property.

\section{Association schemes on poset metric spaces}
\subsection{Association schemes on poset metric spaces} MacWilliams-type relations between weight enumerators of additive codes
and their dual codes can be derived using Delsarte's theory of association schemes.
We briefly summarize the facts about association schemes used below, following the presentation in \citep[ch.~2]{bro89}.
%In the form most useful for our needs this theory is presented in Brouwer et al. \citep{bro89},
%which is the main source for the material this section.
Given a finite space $X$, a symmetric association scheme $\cA(X,\cR)=(R_0,R_1,\dots,R_s)$
is a partition of the set $X\times X$ into $s+1$ classes such that

%\vspace{.05in}
\nd  (i)\; $R_0=\{(x,x),x\in X\}$,

\nd (ii)\; if $(x,y)\in R_\alpha$ then $(y,x)\in R_\alpha$ for all $x,y\in X$, $\alpha=0,1,\dots,s.$

\nd\hangindent.27in \hangafter=1
 (iii)\; there are numbers $p_{\alpha\beta}^\gamma$ such that for any $(x,y)\in R_\gamma$ the number 
  of $z\in X$ with $(x,z)\in R_\alpha$ and $(y,z)\in R_\beta$ equals $p_{\alpha\beta}^\gamma,$
$\alpha,\beta,\gamma\in\{0,1,\dots,s\}.$

\vspace{.05in}\nd
The adjacency matrix $A_\alpha$ of the class  $R_\alpha$ is defined by
   $$
  (A_\alpha)_{xy}={\bf 1}_{(x,y)\in R_\alpha}, \quad \alpha=0,1,\dots,s
   $$  
	meaning that an entry $(x,y)\in A_\alpha$ is $1$ if $(x,y)\in R_\alpha$ and $0$ otherwise.
	
The matrices $A_\alpha$ generate an $(s+1)$-dimensional commutative $\complexes$-algebra 
called the adjacency algebra of $\cA.$  This algebra has a basis of primitive idempotents $(E_\alpha,\alpha=0,1,\dots,s)$.
Define matrices $P$ and $Q$, called the eigenvalues of $\cA$, by
   \begin{equation}\label{eq:ev}
   A_\beta=\sum_{\alpha=0}^s P_{\alpha\beta}E_\alpha  \quad\text{and}\quad E_\beta=\frac1{|X|}\sum_{\alpha=0}^s 
   Q_{\alpha\beta}A_\alpha.
   \end{equation}
The numbers $v_\alpha=p_{\alpha\alpha}^0$ and $m_\alpha=\rank E_\alpha$ are called the valencies and the multiplicities
of the scheme $\cA.$ 

Suppose that $X$ has the structure of an abelian group. A scheme $\cA$ is called a translation association scheme if
for all $R\in \cR$
  $$
   (x,y)\in R \;\Rightarrow\; (x+z,y+z)\in R, \quad z\in X.
  $$
In this paper we restrict our attention to the case $X=\ff_q^n,$ an $n$-dimensional linear space.

The Hamming association scheme is defined by the relations $R_\alpha=\{(x,y)\in \ff_q^n: d_H(x,y)=\alpha\}.$
If $\cP$ is a single chain (Example 2) or an hierarchical poset (Example 3), we can again define $R_\alpha=\{(x,y)\in \ff_q^n: d_\cP(x,y)=\alpha\}, \alpha=0,1,\dots,n$
and prove that these relations satisfy (i)-(iii). However, for the NRT poset (Example 4) this approach fails
to produce an association scheme. To define the NRT association scheme, also called the {\em ordered Hamming scheme}, 
let 
   \begin{equation}\label{eq:shape}
     \shape(I)=(e_1,\dots,e_r), \text{ where } e_j=|\{i\in M(I): \;l(i)=j\}|, 
     j=1,\dots,r.
  \end{equation}
  Define $\cR=\{R_e, e=(e_1,\dots, e_r)\},$ where $e$ ranges over all the $r$-tuples of 
  nonnegative integers such that $\sum_i e_i\le m.$
The relations of the scheme are given by $R_e=\{(x,y)\in (\ff_q^n)^2: \shape(x-y)=e\}.$
 This scheme was introduced in \citep{mar99} and further studied in \citep{bar09b}.
 We note that the group of linear isometries of the NRT space acts transitively on the sets
 $N_e=\{x\in \ff_q^n:\shape(x)=e\}.$

\vspace*{.1in}
Let us consider a general poset metric space $X=(\ff_q^n,d_{\cP}).$
There is more than one way to define an association scheme on $X$\footnote{This is similar to
classical coding theory: the action of the full group of linear isometries of the Hamming space 
defines the scheme
relative to the Hamming weight, while the permutation part of the group gives rise to
the scheme relative to complete weight enumerators, viz. \citep{mac91}, Sect.5.6.}. In the most
general case, to every ideal $I$ there corresponds a relation $R_I:=\{(x,y)\in X^2: \ideal{x-y}=I\}$, although this definition
is too general to be useful.
We rely on the definition that is the most relevant to the theory of linear codes.
Let $d_\cP(\cdot,\cdot)$ be a poset metric on $X$ and let $GL_\cP(n)$ be the group of linear isometries. The action of $GL_\cP(n)$ 
defines an equivalence relation on $X\times X$ where $x\sim y$ if the vectors $x$ and $y$ belong
to the same orbit, i.e. there is $T\in GL_\cP(n)$ such that $T(x)=y.$ 
Let $\cX:=X/\!\!\sim$ be the set of orbits and suppose that $|\cX|=s+1$ for some $s.$
Consider a partition $\cR=\{R_\alpha|\alpha\in \cX\}$ of $X\times X$ given by 
  \begin{equation}\label{eq:relation}
  R_\alpha=\{(x,y)\in X\times X| x-y\in \alpha\}, \quad\alpha\in \cX.
  \end{equation}

The following simple observation sets the stage for the study of linear poset codes.
\begin{proposition} \label{prop31} {\sl The pair $(X,\cR)$ forms a translation association scheme $\cA$ with $s$
classes.}
\end{proposition}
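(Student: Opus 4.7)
My plan is to verify the three axioms (i)--(iii) of a symmetric association scheme for the partition $\cR$ and then check translation invariance, all of which reduce to basic properties of the group action of $GL_\cP(n)$ on $X = \ff_q^n$.

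For axiom (i), I would observe that the zero vector $0\in X$ is fixed by every linear map, so its $GL_\cP(n)$-orbit is $\{0\}$. Writing $\alpha_0 = \{0\} \in \cX$, we get $R_{\alpha_0} = \{(x,y) : x-y = 0\} = \{(x,x) : x \in X\}$, as required. For axiom (ii) (symmetry), the key observation is that $-\text{id}$ is the diagonal matrix with entries $-1$, which trivially belongs to $G_\cP \subseteq GL_\cP(n)$ by the conditions in \eqref{eq:group}. Consequently every orbit $\alpha \in \cX$ satisfies $-\alpha = \alpha$, so $x-y \in \alpha$ iff $y-x \in \alpha$, i.e. $R_\alpha$ is symmetric.

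For axiom (iii), fix $\alpha,\beta,\gamma \in \cX$ and $(x,y)\in R_\gamma$. I would count
\[
p_{\alpha\beta}^\gamma(x,y) := \#\{z \in X : (x,z)\in R_\alpha,\, (y,z)\in R_\beta\}.
\]
Setting $u := x-z$ and $c := x-y$, this equals $\#\{u \in \alpha : u - c \in \beta\}$, so it depends on $(x,y)$ only through $c = x-y$. To show the count depends only on the orbit $\gamma$ of $c$, suppose $c' = T(c)$ for some $T \in GL_\cP(n)$. Since $T$ is linear and the orbits $\alpha,\beta$ are $GL_\cP(n)$-invariant, the map $u \mapsto T(u)$ is a bijection from $\{u \in \alpha : u-c \in \beta\}$ onto $\{v \in \alpha : v - c' \in \beta\}$, because $v - c' = T(u) - T(c) = T(u-c)$ lies in $\beta$ iff $u-c$ does. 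Hence the count depends only on $\gamma$, and $p_{\alpha\beta}^\gamma$ is well defined.

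Finally, translation invariance is immediate from the definition \eqref{eq:relation}: for any $z \in X$, $(x,y) \in R_\alpha \Leftrightarrow x-y \in \alpha \Leftrightarrow (x+z)-(y+z) \in \alpha \Leftrightarrow (x+z,\, y+z) \in R_\alpha$. Since $|\cX| = s+1$ is finite and all axioms are verified, $(X,\cR)$ is a translation association scheme with $s$ classes. There is no real obstacle here: the whole argument is essentially the standard observation that orbits of a linear group that contains $-\text{id}$ give rise to a symmetric translation scheme; the only thing specific to the setting is checking that $-\text{id}\in GL_\cP(n)$, which is trivial from \eqref{eq:group}.
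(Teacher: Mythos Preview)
Your proof is correct and follows the same direct verification of axioms (i)--(iii) and translation invariance as the paper's proof. Your treatment of symmetry via $-\mathrm{id}\in G_\cP$ is in fact more transparent than the paper's brief appeal to closure under inverses, and your substitution $u=x-z$ in axiom~(iii) slightly streamlines the paper's bijection argument, but the overall strategy is identical.
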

\begin{proof} We need to check the definition of the association scheme. The subsets $R_\alpha,
\alpha\in \cX$ form a partition because $\sim$ is an equivalence relation on $X$ and hence 
a partition of $X\times X.$ 

Property (i) follows because
every $T\in G$ is invertible and linear, so $T(x)=0$ iff $x=0.$

Property (ii) (symmetry) follows because $G$ is a group, and so $T\in G$ iff $T^{-1}\in G.$

Let us verify the intersection property (iii). Let $(a_1,a_2)$ and $(b_1,b_2)$
be representatives of the orbits $\alpha$ and $\beta,$  respectively.
Given $(x,y)\in R_\gamma$ denote
   $$
  p_{\alpha\beta}^{(x,y)}=\{z\in X|(x,z)\in R_\alpha \text{ and } (y,z)\in R_\beta\}.
   $$
Consider another pair $(x',y')\in R_\gamma.$ There exists $T\in G$ such that
$T(x-y)=x'-y'.$ Let $z\in p_{\alpha\beta}^{(x,y)},$ then $(x,z)\in R_\alpha$ and
$(y,z)\in R_\beta.$ We shall prove that $(T(x),T(z))\in R_\alpha$ and
$(T(y),T(z))\in R_\beta.$ By definition of $R_\alpha$ there is $S\in G$ such that
$S(x-z)=a_1-a_2.$ Then
  \begin{align*}
   T(x)-T(z)=T(x-z)=TS^{-1}(a_1-a_2)
  \end{align*}
so $(T(x),T(z))\in R_\alpha.$ Similarly, $(T(y),T(z))\in R_\beta.$ It follows that
there is an injective map $T:p_{\alpha\beta}^{(x,y)}\to p_{\alpha\beta}^{(x',y')}.$
Similarly, $T^{-1}$ defines an injective map in the reverse direction, and so
$|p_{\alpha\beta}^{(x,y)}|=|p_{\alpha\beta}^{(x',y')}|.$ This completes the proof.
\end{proof}

{We defined each relation $R_\alpha$ to be an orbit of a vector under the group of linear isometries relative to a poset metric. 
It is possible to define an association scheme on $(\ff_q^n,d_{\cP})$ in another way as follows. 
Consider an equivalence relation on $\cI(\cP)$ and denote by $\tilde I$ the equivalence class of ideals that 
contains $I$. Let $\cR=(R_{\tilde I}) $ be a set of relations on $\ff_q^n$, 
where $R_{\tilde I}=\{(x,y)\in (\ff_q^n)^2: \ideal{x-y}\in\tilde I\}.$ It is possible to prove, in a manner similar to Proposition~\ref{prop31}, that for any equivalence relation on  $\cI(\cP)$, this defines an association scheme on $(\ff_q^n,d_{\cP})$.

If the extension property holds true, and the equivalence relation on $\cI(\cP)$ is given by $I\sim J$ \emph{iff} $I$ and 
$J$ are isomorphic as posets, this definition gives the same association scheme as the one defined in Proposition 
\ref{prop31} above. } This kind of approach, considering relations between equivalent classes of ideals and orbits of vectors in  $\ff_q^n$ under the action of groups, was introduced in \citep{hyu06}; see also \citep{choi12}.

\section{Duality of schemes}

A translation association scheme $\cA$ has a dual scheme $\cA^\ast$ defined by the characters 
of the group $X.$  Characters form a multiplicative group $X^\ast$ with
the operation given by $(\chi_1\chi_2)(x)=\chi_1(x)\chi_2(x).$ It is well known that $X$ and
$X^\ast$ are canonically isomorphic as groups. 
\remove{
Consider the natural action of ${\mathfrak A}$
on $X^\ast$ induced by the action of the incidence matrices:
   $$
     A_\alpha\chi=\Big(\sum_{x\in\alpha}\chi(x)\Big)\chi
   $$
where $\alpha$ is the orbit of $x$ under the action of $G$ and $\chi=(\chi(x),x\in X)$
is a vector.} Let $\chi,\psi\in X^\ast$.
Define the relations of the dual scheme by putting $(\chi,\psi)\in R_i^\ast$ iff
$E_\alpha\eta=\eta,$ where $\eta=\chi^{-1}\psi.$
The dual translation scheme $\cA^\ast(X^\ast,R^\ast)$ satisfies the following properties \citep[p.69-70]{bro89}:

\vspace*{.05in}\nd \hangindent.3in \hangafter=1
(D1) Let $v_\alpha, m_\alpha, \alpha\in\cX$ be the valencies and multiplicities of the scheme $\cA$ and
let $P,Q$ be its eigenvalues. The scheme $\cA^\ast$ is a translation association scheme 
with $s$ classes, valencies $v_\alpha^\ast=m_\alpha,$ multiplicities $m_\alpha^\ast=v_\alpha,$ 
and eigenvalues $P^\ast=Q,Q^\ast=P.$

\vspace*{.05in}\nd \hangindent.3in \hangafter=1
(D2) Let $N_\alpha=\{x\in X| (x,0)\in R_\alpha\}, N^\ast_\alpha=\{\chi\in X^\ast| E_\alpha\chi=\chi\}.$ Then 
$v_\alpha=|N_\alpha|, m_\alpha=|N_\alpha^\ast|,$
    \begin{equation}\label{eq:PQ}
  P_{\alpha\beta}=\sum_{x\in N_\beta} \chi(x), \;\chi\in N_\alpha^\ast, \quad 
  Q_{\alpha\beta}=\sum_{\chi\in N_\beta^\ast} \chi(x),\; x\in N_\alpha.
   \end{equation}
 
%Moreover, for each character $\chi$ of $\ff_q^n$ (considered as a vector indexed by $\ff_q^n$) we have \citep[p.69]{bro89}
%  $    A_j\chi=\big( \sum_{x\in N_j}\chi(x)\big)\chi
%  $
%Multiplying this on the right by $\chi^\ast,$ suming on $\chi^\ast\in N_\alpha^\ast$ and using \eqref{eq:ev}, we obtain
%the well-known expression
\vspace*{.05in}\nd \hangindent.3in \hangafter=1
(D3) $E_\alpha=\frac1{|X|}\sum_{\chi\in N_\alpha^\ast}\chi \chi^\dag, \;\alpha=0,1,\dots, s.$

\vspace*{.05in}
Two $s$-class association schemes $\cA(X,\cR)$ and $\cB(X',\cR')$
are called {\em isomorphic}
if there is a bijection $\phi: X\to X'$ such that $(x,y)\in R_\alpha$
if and only if $(\phi(x),\phi(y))\in R'_{\pi(\alpha)}$ for some fixed permutation of the indices
in $\cR$ and $\cR'.$ If $\cA$ and $\cB$ are translation schemes, then $\phi$ agrees with
the translations.
%The scheme $\cA$ and its dual may or may not be isomorphic; if they are, then $\cA$ is called self-dual. 
%The inner distribution of a linear code is the set $\{b_i,i=0,1,\dots,s+1

\subsection{Self-dual posets}
Our motivation to study self-dual association schemes on posets comes from duality
of linear codes. 
Let $\cC\subset X$ be a linear code. The dual code of $\cC$ is the subgroup $\cC^\ast=\{\chi\in X^\ast|
\chi(x)=1$ for all $x\in \cC\}$. Even though $X\cong X^\ast,$
the codes $\cC$ and $\cC'$ live in different association schemes that are not necessarily
isomorphic. In classical coding theory problems, the dual code $\cC^\bot$ is defined with 
respect to an inner product on $X\times X,$ whereupon the codes $\cC^\bot$ and $\cC^\ast$ 
are identified with the help of the isomorphism of the dual groups. 
This is consistent for the Hamming scheme, but generally such identification
does not necessarily preserve the scheme structure.

\vspace*{.05in}
{\em Example 2 (continued):} Let $\cR=\{R_0,R_1,\dots,R_n\},$ where $R_i=\{(x,y): l(M(x-y))=i\}$ for all $i$.
The dual scheme can be realized by relations on characters that are defined by the dual chain $\cP^\bot=(1\succ 2\succ\dots\succ n)$
\citep{mar99}.

\vspace*{.05in}
Given a poset metric space $X,$ define the scheme $\cA^\bot=\cA(X,\cR_{\cP^\bot})$
with respect to the action of the group $GL_{\cP^\bot}(n).$ 
Duality of linear codes will be consistent with this definition
if $\cA^\ast\cong\cA^\bot$. Natural candidates for this to hold are self-dual posets,
in which case, of course, $\cA\cong\cA^\bot.$ 

This discussion motivates the following theorem.
\vspace*{.05in}\begin{theorem}\label{thm:duality} 
{\sl Suppose that $\cA$ is a translation association scheme on $X$ whose classes
are given by orbits of the group $GL_{\cP}(n)$ of linear isometries of a poset metric space $(X,\cP).$
Then $\cA^\ast\cong\cA^\bot$ if and only if $\cP$ is self-dual.}
\end{theorem}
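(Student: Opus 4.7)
The plan is to identify $X^{\ast}$ with $X$ via a fixed nontrivial additive character $\psi$ of $\ff_q$, using the pairing $\chi_y(x)=\psi(\sum_i x_iy_i)$. Under this identification, the dual of a linear operator $T\in GL(X)$ acts on $X$ as $y\mapsto (T^{-1})^Ty$, so the classes of the dual scheme $\cA^{\ast}$ coincide, as subsets of $X$, with the orbits of the ``transpose--inverse'' group $\widehat{G}:=\{(T^{-1})^T:T\in GL_\cP(n)\}$. The classes of $\cA^\bot$ are by definition the orbits on $X$ of $GL_{\cP^\bot}(n)$. The theorem thus reduces to the assertion that the orbit partitions of $\widehat{G}$ and $GL_{\cP^\bot}(n)$ on $X$ agree up to a linear change of coordinates if and only if $\cP$ is self-dual.

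For the ``if'' direction, fix an order-reversing bijection $\pi:\n\to\n$ realizing $\cP\cong\cP^\bot$, and let $W$ be its permutation matrix. I would prove that the map $T\mapsto W(T^{-1})^TW^{-1}$ is a group isomorphism from $GL_\cP(n)$ onto $GL_{\cP^\bot}(n)$; this gives $W\widehat{G}W^{-1}=GL_{\cP^\bot}(n)$ and hence $\cA^{\ast}\cong\cA^\bot$. Using the semidirect decomposition $GL_\cP(n)=G_\cP\rtimes Aut(\cP)$ from \eqref{eq:group}, it suffices to check the claim on each factor. The automorphism factor is immediate because $\pi$ conjugates $Aut(\cP)$ onto $Aut(\cP^\bot)=Aut(\cP)$. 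For $G_\cP$, the key entry-by-entry calculation is that the transpose swaps the conditions $i\in\ideal{j}_\cP$ and $j\in\ideal{i}_\cP$; since $\ideal{i}_\cP$ is the principal filter of $i$ in $\cP^\bot$, conjugation by $W$ translates the resulting constraints from $\cP$ to $\cP^\bot$, recovering the defining conditions of $G_{\cP^\bot}$.

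For the ``only if'' direction, assume $\cA^{\ast}\cong\cA^\bot$ as translation schemes. The plan is to extract a poset isomorphism $\cP\to\cP^\bot$ by tracking a distinguished family of orbits. By the proof of Proposition~\ref{prop:extension}, the orbit of a unit vector $e_j$ under $G_\cP$ consists exactly of the vectors $x$ with $\supp(x)\subseteq\ideal{j}_\cP$ and $x_j\neq 0$, so it is naturally labeled by the principal ideal $\ideal{j}_\cP$; dually, the orbit of $e_j$ under $\widehat{G}$ is labeled by the principal filter of $j$ in $\cP$, equivalently the principal ideal of $j$ in $\cP^\bot$. A scheme isomorphism $\cA^{\ast}\cong\cA^\bot$ then matches these distinguished orbits and preserves the containment relation among their associated ideals (which is encoded in the intersection numbers $p^{\gamma}_{\alpha\beta}$). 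Reading off this matching should produce the desired order isomorphism $\cP\to\cP^\bot$.

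The main obstacle lies in the ``only if'' direction: a scheme isomorphism a priori provides only orbit-level combinatorial data (sizes, valencies, intersection numbers), and converting this into an order-level bijection requires one to recognize the principal-ideal orbits intrinsically within the scheme and to reconstruct the covering relation of $\cP$ from the $p^{\gamma}_{\alpha\beta}$. A secondary technical subtlety is that \eqref{eq:group} is stated relative to a linear extension of $\cP$; since $\cP$ and $\cP^\bot$ use different linear extensions in general, the permutation $W$ plays the double role of realizing the duality of posets and reconciling the two labelings.
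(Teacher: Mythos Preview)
Your ``if'' direction via the transpose--inverse group is correct and in fact yields more than you claim: the identity $\widehat{G_\cP}=G_{\cP^\bot}$ (together with $Aut(\cP)=Aut(\cP^\bot)$) holds for \emph{every} poset, without invoking a self-duality permutation $W$.  The paper's proof of this direction is different in flavor: it works directly with the primitive idempotents via property~(D3), verifying that the adjacency matrices of $\cA^\ast$ and $\cA^\bot$ agree under the map induced by the order-reversing permutation.  Your group-theoretic route is cleaner and more structural.

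The ``only if'' direction is where your proposal diverges from the paper and where the real gap lies.  You correctly diagnose the obstacle: a scheme isomorphism furnishes only the array of intersection numbers, and there is no evident way to single out the principal-ideal orbits or to read off the covering relation of $\cP$ from the $p^{\gamma}_{\alpha\beta}$ alone.  The paper bypasses this difficulty entirely by exploiting an extra hypothesis you do not use---that the isomorphism $\phi:X\to X$ is \emph{linear}---and arguing by induction on $n$ (this is Proposition~\ref{prop3}).  The key device is an additive one: whenever $j$ is covered by a maximal element $i$ in $\cP$, the matrix in $G_\cP$ that sends $e_i\mapsto e_i-e_j$ shows that the pairs $(e_j,e_i)$ and $(e_j,e_i+e_j)$ lie in the \emph{same} class of $\cA_\cP$.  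Because $\phi$ is linear, $\phi(e_i+e_j)=\phi(e_i)+\phi(e_j)$, so the corresponding pairs $(f_j,f_i)$ and $(f_j,f_i+f_j)$ lie in the same class on the $\cQ$-side; unwinding the structure of $G_\cQ$ then forces $j\prec_\cQ i$.  This recovers the set $J^-(i)$ of immediate predecessors of the top element on both sides, and the induction on $n$ finishes the argument.  In short, the paper trades your combinatorial recognition problem for an algebraic identity among basis vectors, and it is precisely the linearity of $\phi$ that makes this trade possible---your outline never invokes that linearity, which is why the obstacle you flag remains unresolved.
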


\vspace*{.05in}{\em Proof:} 
%We have 
%  $  \chi_x(y)=\omega^{(x,y)},   $
%where $\omega$ is a primitive $q$th degree root of unity and $(x,y)=\sum_{i=1}^n x_i y_i.$
%\cite[p.~71-72]{}.
The ``if'' part follows straightforwardly from self-duality of $\cP$.
Formally, let $\alpha\in \cX$ be an orbit of $GL_\cP(n).$ 
For $x\in X$ denote by $\chi_x\in X^\ast$ its image under the isomorphism $X\cong X^\ast.$
%will be established if we find a bijective map $\phi:X\to X^\ast$
%such that $(x,y)\in R_\alpha$ if and only if $(\chi_{\phi(x)},\chi_{\phi(y)})\in R^{\ast}_{\pi(\alpha)},$
%for some fixed permutation $\pi$ of the orbits. 
Let $\tau$ be a permutation on $\n$ that
maps $\cP$ to $\cP^\bot$ and let $T_\tau$ be the corresponding $n\times n$ permutation matrix. 
The matrix $\phi=T_\tau$ defines a linear isometry on $X.$ 
\remove{Put $\phi=T_\tau$ and
consider the sets $\tilde R_\alpha^\ast=\{(\chi_{\phi(x)},\chi_{\phi(y)})| (x,y)\in R_\alpha\},
\tilde N_\alpha^\ast=\{\chi_{\phi(x)}|x\in N_\alpha\}, \alpha\in\cX.$ }

We have $(\chi_x,\chi_y)\in R_\alpha^\ast$ iff $\eta:=\chi_x^{-1}\chi_y$ is in eigenspace $\alpha,$
i.e., $E_\alpha \eta=\eta,$ or, using (D3)
  $$
  \frac1{|X|}\sum_{\chi\in N_\alpha^\ast}\chi \chi^\dag\eta={\bf 1}(\eta\in N_\alpha^\ast),
  $$
i.e., iff $(x-y)\in\alpha.$
Thus for the adjacency matrices of $\cA^\ast$ we have $(A_\alpha^\ast)_{xy}={\bf 1}((x-y)\in\alpha).$
At the same time, if $A^\bot$ is realized as the scheme on $\cI(\cP^\bot),$ then the orbit $\alpha$
is transformed into an isomorphic orbit $\phi(\alpha)$ with respect to the action of $GL_\cP(n)$ on $\cP^\bot.$
Thus, we have $(A^\bot_{\phi(\alpha)})_{xy}={\bf1}((x-y)\in\alpha),$ i.e., $\cA^\ast{\cong_\phi} \cA^\bot.$

\remove{
Consider the matrices
  $$
    \tilde E_\alpha=\frac 1{|X|}\sum_{\chi\in \tilde N_\alpha^\ast} \chi\chi^\dag.
  $$
These matrices are symmetric positive semidefinite, and have the following properties
for any $\alpha\in \cX:$
  $$
     (a)\,\tilde E_\alpha^2=\tilde E_\alpha, \; 
    (b)\,\tilde E_\alpha\eta=\eta \text{ iff } \eta\in  \tilde N_{\alpha}^\ast,\;
      (c)\,A_\alpha =\sum_{\beta\in \cX} P_{\beta\alpha}\tilde E_\beta
  $$
These properties are verified by straightforward calculations. Property (a) is immediate.
To prove (b), let $y\in X$ and
compute
   \begin{align*}
  (\tilde E_\alpha\eta)(y)&=\Big(\frac 1{|X|}\sum_{\chi\in \tilde N_\alpha^\ast} \chi\chi^\dag\eta\Big)(y)
=\frac 1{|X|}\sum_{\chi\in \tilde N_\alpha^\ast} \sum_{z\in X}
\chi(y)\chi^\dag(z)\eta(z) \\
&=\frac1{|X|}\sum_{x\in N_\alpha}\chi_{\phi(x)}(y)\sum_z\chi_{\phi(x)}^\dag(z)\eta(z)
=1(\eta\in \tilde N_\alpha^\ast)\eta(y),
  \end{align*}
where the last step holds because the sum on $z$ equals $|X|$ or 0 according as $\chi_{\phi(x)}=\eta$
or not. By self-duality of $\cP$, these matrices coincide with the primitive idempotents of 
 we conclude that 
these matrices form the set of primitive idempotents of $\cA,$ i.e., 
$\tilde E_\alpha=E_\alpha,$ and therefore, $\tilde R_\alpha^\ast=R_\alpha^\ast.$}

The ``only if'' part will follow from Proposition \ref{prop3} below. Let $\cQ$ be a poset on $\n$ and let $T$ be a linear isometry
of the poset metric space $X=(\ff_q^n,\cQ).$ Consider the poset metric space on $X_T=({\mathbb{F}}_{q}^{n},\cQ_T)$ 
induced by $\cQ$ and $T$. Namely, given a vector $x=\sum_{i}x_{i}T(e_{i})$ we define the weight $|x|_{\cQ,T}:=|\ideal{x}_{\cQ}|.$
%define the weight $\omega_{Q,T}(x)=|\langle\{i|x_{i}\neq 0\}\rangle_{Q}|$
%where $\langle.\rangle_{Q}$ refers to the smallest $Q$-ideal that contains the
%subset argument. Define the distance $d_{Q,T}(x,y)=\omega_{Q,T}(x-y).$

\begin{lemma}
\label{lem}The poset metric spaces $X$ and $X_T$ are isometric
and thus the association schemes $\cA_\cQ$ and $\cA_{\cQ,T}$ are isomorphic.
\end{lemma}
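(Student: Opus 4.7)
The plan is to take $\Phi := T$ itself, viewed as a linear bijection of $\ff_q^n$, and to show that it serves simultaneously as an isometry $X \to X_T$ and as an intertwiner of the two group actions defining the schemes.

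First I would verify that $\Phi$ is a linear isometry. For any $y = \sum_i y_i e_i \in X$ we have $\Phi(y) = T(y) = \sum_i y_i T(e_i)$, so the coefficients of $\Phi(y)$ in the basis $\{T(e_i)\}_i$ are exactly the standard coordinates $(y_1,\ldots,y_n)$ of $y$. By the definition of $|\cdot|_{\cQ,T}$,
\[
|\Phi(y)|_{\cQ,T} \;=\; |\ideal{y}_{\cQ}| \;=\; \omega_\cQ(y),
\]
and linearity then yields $d_{\cQ,T}(\Phi(x),\Phi(y)) = d_\cQ(x,y)$ for all $x,y \in X$. Bijectivity of $\Phi$ is automatic since $T$ is invertible.

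Second, I would transfer the scheme structure through $\Phi$. A direct computation using the isometry property shows that the two linear isometry groups are conjugate,
\[
GL_{\cQ_T}(n) \;=\; \Phi\, GL_{\cQ}(n)\, \Phi^{-1},
\]
since a linear bijection $S$ preserves $d_\cQ$ if and only if $\Phi S \Phi^{-1}$ preserves $d_{\cQ,T}$. Consequently, $\Phi$ sends each orbit $\alpha$ of $GL_\cQ(n)$ on $X$ bijectively onto an orbit $\Phi(\alpha)$ of $GL_{\cQ_T}(n)$ on $X_T$, and hence each relation $R_\alpha$ of $\cA_\cQ$ onto the relation $R_{\Phi(\alpha)}$ of $\cA_{\cQ,T}$. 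The map on pairs $(x,y)\mapsto(\Phi(x),\Phi(y))$, together with the index permutation $\alpha\mapsto\Phi(\alpha)$, is exactly the scheme isomorphism in the sense defined earlier.

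I do not expect a substantive obstacle: the weight $|\cdot|_{\cQ,T}$ has been engineered precisely so that $T$ is an isometry, and the scheme isomorphism is a formal consequence of both schemes being defined as orbit partitions under their respective isometry groups. The only delicate point is verifying that the orbit-to-orbit correspondence induced by $\Phi$ indeed gives a permutation of the index set of the scheme, which is immediate from $\Phi$ being a bijection that conjugates the groups.
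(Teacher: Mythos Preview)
Your proposal is correct and follows the same approach as the paper: both take $\Phi=T$ and verify directly that $\omega_{\cQ,T}(T(x))=\omega_{\cQ}(x)$ from the definition of $|\cdot|_{\cQ,T}$. The paper's proof is in fact even terser than yours---it stops after the isometry calculation and treats the scheme isomorphism as immediate---whereas you spell out the conjugacy $GL_{\cQ_T}(n)=\Phi\,GL_{\cQ}(n)\,\Phi^{-1}$ and the resulting orbit correspondence, which is a welcome addition but not a different idea.
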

\begin{proof}
Indeed, given $x=\sum_{i}x_ie_{i}\in\mathbb{F}_{q}^{n}$
we have 
\begin{align*}
\omega_{\cQ,T}(T(  x)  )&  =\omega_{\cQ,T}\big(  T(
\sum_{j=j}^{n}\alpha_{j}e_{j})  \big) 
=\omega_{\cQ,T}\big(  \sum_{j=1}^{n}\alpha_{j}T(  e_{j})  \big)  =\vert \langle \{  j|\alpha_{j}\neq0\}  \rangle
_{\cQ}\vert \\
&  =\omega_{\cQ}(  x). \hfill
\end{align*}
The lemma is proved.
\end{proof}

\begin{proposition}
\label{prop3}Let $\cP$ and $\cQ$ be two posets defined on $\n$. Consider the
metric spaces $X=(\ff_q^n,d_{\cP})$ and $X'=(\ff_q^n,d_{\cQ)}$ and let $GL_\cP(n)$ and $GL_\cQ(n)$ be their
groups of linear isometries. Suppose that the translation association schemes
$\cA_\cP$ and $\cA_{\cQ}$ defined by these groups are isomorphic and the isomorphism $\phi:X\to X^{\prime}$ is
linear. Then the poset metric spaces $(\ff_q^n,d_{\cP})$ and $(\ff_q^n,d_{\cQ})$ are isometric
and the posets $\cP$ and $\cQ$ are isomorphic.
\end{proposition}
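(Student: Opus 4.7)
My plan is to leverage the hypothesis that $\phi$ is a linear scheme isomorphism to show that $\phi$ conjugates the two linear isometry groups onto each other, and then extract a concrete poset isomorphism from the structural description of these groups given in \citep{pan09b}.

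The first step is to establish $\phi\,GL_\cP(n)\,\phi^{-1} = GL_\cQ(n)$. Fix $T \in GL_\cP(n)$ and $x \in \ff_q^n$; since $\phi$ identifies the $GL_\cP(n)$-orbit partition of $X$ with the $GL_\cQ(n)$-orbit partition of $X'$, the vector $\phi T\phi^{-1}(x)$ lies in the same $GL_\cQ(n)$-orbit as $x$. The $\cQ$-weight is constant on each such orbit, so $\phi T\phi^{-1}$ is a linear $\cQ$-isometry and hence belongs to $GL_\cQ(n)$; a symmetric argument using $\phi^{-1}$ gives the reverse inclusion.

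The second step is to produce a bijection $\pi\colon \n \to \n$ that is a poset isomorphism $\cP \to \cQ$. Recall from the proof of Proposition~\ref{prop:extension} that every $S \in GL_\cQ(n)$ induces an automorphism $\phi_S \in Aut(\cQ)$ via $\phi_S(j) = M(\langle S(e_j)\rangle_\cQ)$, yielding a surjection with kernel $G_\cQ$. I would define $\pi(i) = M(\langle\phi(e_i)\rangle_\cQ)$ after first normalising a representative of the $GL_\cQ(n)$-orbit through $\phi(e_i)$ by a suitable element of $G_\cQ$, so that this maximal element is unique. Order preservation would then be checked on test vectors of the form $e_i + e_j$ for $i\prec_\cP j$: such a vector lies in the $G_\cP$-orbit of $e_j$, and pushing forward by $\phi$ would force $\pi(i)\preceq_\cQ \pi(j)$, provided $\phi G_\cP\phi^{-1} \subseteq G_\cQ$. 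Once $\pi$ is available, the corresponding permutation matrix $T_\pi$ furnishes the claimed isometry $(\ff_q^n,d_\cP) \to (\ff_q^n,d_\cQ)$ through Lemma~\ref{lem}.

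The principal obstacle lies in this second step. Because $\phi(e_i)$ is a generic vector of $X'$, the ideal it generates in $\cQ$ may admit several maximal elements, and both the well-definedness and the order-preservation of $\pi$ require care. The cleanest route is to give an intrinsic characterisation of $G_\cP$ inside $GL_\cP(n)$---for instance as the subgroup stabilising each line $\ff_q e_j$ modulo vectors of strictly smaller $\cP$-weight---from which $\phi G_\cP \phi^{-1} = G_\cQ$ would follow by Step~1, eliminating the normalisation ambiguity. Making this characterisation precise, and verifying its compatibility with $\phi$, is the technical heart of the argument.
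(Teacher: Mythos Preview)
Your Step 1 is correct: since $\phi$ is linear, $\phi(0)=0$, so the scheme isomorphism carries each set $N_{\cP,\alpha}=\{x:(x,0)\in R_{\cP,\alpha}\}$ bijectively onto some $N_{\cQ,\pi(\alpha)}$; in other words $\phi$ sends $GL_\cP(n)$-orbits to $GL_\cQ(n)$-orbits, and your weight-preservation argument then gives $\phi\,GL_\cP(n)\,\phi^{-1}=GL_\cQ(n)$.

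The obstacle in Step 2 is, however, not removable along the line you propose. The subgroup $G_\cP$ is \emph{not} intrinsic to the abstract group $GL_\cP(n)$: it is the kernel of the surjection $T\mapsto\phi_T$ onto $Aut(\cP)$, and that surjection is defined through the standard basis. Any description such as ``stabilises each line $\ff_q e_j$ modulo vectors of strictly smaller $\cP$-weight'' still refers to the $e_j$; because $\phi$ does not send standard basis vectors to standard basis vectors, there is no mechanism forcing $\phi\,G_\cP\,\phi^{-1}=G_\cQ$. The most you can extract from Step 1 alone is an abstract isomorphism $GL_\cP(n)\cong GL_\cQ(n)$, and even if one could match the normal subgroups, the resulting $Aut(\cP)\cong Aut(\cQ)$ does not imply $\cP\cong\cQ$. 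So the well-definedness of $\pi(i)=M(\langle\phi(e_i)\rangle_\cQ)$ remains open, and with it the entire second half of the argument.

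The paper avoids this problem by a different organisation: induction on $n$. One relabels (via Lemma~\ref{lem}) so that $n$ is $\cP$-maximal and $n-1$ is covered by it, and observes that $e_n$ and $e_n-e_{n-1}$ lie in the same $GL_\cP(n)$-orbit. Pushing through $\phi$, the images $f_n$ and $f_n-f_{n-1}$ lie in the same $GL_\cQ(n)$-orbit, and an analysis of what a $\cQ$-isometry can do to $f_n$ forces $n-1\prec_\cQ n$ (again after a relabelling on the $\cQ$ side). This yields $J_\cP^{-}(n)=J_\cQ^{-}(n)$, and the induction hypothesis applied to $\fs{n-1}$ finishes. Your test-vector idea with $e_i+e_j$ is close in spirit to this single step, but it is the inductive peeling-off of one maximal element at a time that lets the paper dispense with any global definition of $\pi$ and hence with the well-definedness issue altogether.
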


\begin{proof}
The proof is by induction on $n$. The base case $n=1$ is straightforward.
Suppose that the statement is true for every pair of posets $\cP^{\prime}$ and
$\cQ^{\prime}$ defined on $\fs{n-1}.$ Let $\cP$ and $\cQ$ be posets defined on $\n$
and suppose that $\mathcal{A}_{\cP}\cong\mathcal{A}_{\cQ}.$ In other words, there
is a linear bijection $\phi:X\to X$ such that $(x,y)\in R_{\cP,\alpha}$ if and only
if $(\phi(x),\phi(y))\in R_{\cQ,\pi(\alpha)}$ for some bijection $\pi$ between the
indices in $\cR_{\cP}$ and $\cR_\cQ.$ Consider a basis
$\beta=\{e_{1},\dots,e_{n}\}$ of $X$ such that $\supp(e_{i})=\{i\}.$
Since $\phi$ is linear, it preserves linear independence, and it follows that
$\{\phi(e_{1}),\dots,\phi(e_{n})\}$ is linearly independent and hence a basis
of $X.$

Denote by $\delta_{\cP}(i,j)$ the graphical distance between $i$ and $j$ in the
Hasse diagram of $\cP$ and let 
   $$
   J_{\cP}^{-}(i):=\{j\in\n\mid \delta_\cP(i,j)=1,j\prec i\}.
   $$
   Let $i$ by a maximal element in $\cP$ and suppose w.l.o.g.
that $i=n.$ This can always be accomplished by isomorphically relabeling $\n.$
Indeed, let us denote by $\theta$ such an isomorphism and let
$T=T_{\theta}:\mathbb{F}_{q}^{n}\to\mathbb{F}_{q}^{n}$ be the linear
map induced by $\theta$: $T(\sum x_{i}e_{i})  =\sum x_{i}e_{\theta(i)}.$  
We note that the metric on the new poset $\cP'$ is induced by $\cP$ and $T$. Lemma \ref{lem} implies
that $(X,\cP)$ and $(X,\cP_T)$ are isometric and give rise to isomorphic association schemes, so our assumption
that $n$ is maximal is justified.
In a similar manner, we may assume that (again w.l.o.g.) that $j=n-1\in J_{\cP}^{-}(n),$ or, 
in other words, that $n-1\in\langle n\rangle _{\cP}$.

We claim that there is a $\cP$-isometry that takes $e_{n}=(e_{n-1}+e_{n})-e_{n-1}$ to
$e_{n}-e_{n-1}$. Indeed, define an $n\times n$ matrix $M=(a_{ij})$ by
follows:%
\begin{align*}
a_{ii}  &  =1,\quad \forall i\in\n \\
a_{n-1,n}  &  =-1\\
a_{ij}  &  =0 \quad\text{ otherwise.}%
\end{align*}
It is clear that $A( e_{n}) =e_{n}-e_{n-1}$. Moreover, since
$n-1\prec n$ we have that $A\in GL_\cP(n)$ and is an isometry. 
Since $e_n=(e_{n-1}+e_{n})-e_{n-1},$ we conclude that the pairs 
$(e_{n-1},e_{n})$ and $(e_{n-1},e_{n-1}+e_{n})$ belong  to the same relation
$R_{\cP,[e_{n}]}\in\mathcal{R}_{\cP}.$

Letting $f_{i}=\phi(e_{i})$ for all $i$, we conclude that the pairs
$(f_{n-1},f_{n})$ and $(f_{n-1},f_{n}+f_{n-1})$ belong to the same class
$R_{\cQ,[f_{n}]}\in\cR_\cQ.$ Therefore, there is an isometry that sends
$f_{n}=(  f_{n}+f_{n-1})  -f_{n-1}$ to $f_{n}-f_{n-1}$. Indeed,
given a matrix $B=(  b_{ij})  \in G_{\cQ}$, we have that $B(f_{n})$ 
has coordinates $(  b_{1n},b_{2n},...,b_{n-1,n},b_{nn})$ with respect to the
basis $\{  f_{1} ,...,f_{n}\}  $ . Since the coordinates of $f_{n}-f_{n-1}$ are $(
0,...,0,1,1)  $ and since $B\in G_{\cQ}$ get that $n-1\prec_{\cQ}n$, or in
other words, $n-1\in J_{\cQ}^{-}(n)$. The same reasoning can be applied to
every $i\in J_{\cP}^{-}(  n)  $, so that we find that $J_{\cP}^{-}(  n)  $ is mapped into $J_{Q}^{-}(  n)  $. 
Using the inverse isomorphism, we can conclude that actually this map between
$J_{P}^{-}(  n)  $ and $J_{Q}^{-}(  n)$ is bijective.

The induction hypothesis ensures that the posets $(\fs{n-1},\cP)$ 
and $(\fs{n-1},\cQ)$ are isomorphic.
Since $J_{\cP}^{-}(  n)  =J_{\cQ}^{-}(n)  $, we conclude that $(\n  ,\cP)\cong(\n,\cQ)$.
This concludes the proof.
\end{proof}

\subsection{Remarks on the parameters} Let $\cA=(X,\cR)$ be a translation association
scheme whose relations are indexed by the orbits of the isometry group of a poset metric
space $X=(\ff_q^n,d_\cP)$. The parameters of $\cA$ can be found from Properties (D1)-(D3). 
Let $GL_{\cP}(n)$ be the isometry group of $X$ whose structure is given by \eqref{eq:group}.
%For $I\in\cI(\cP)$ let $\tilde I:=\{\phi(I), \phi\in Aut(\cP)\}.$
Let $N(x)=|\{Tx, T\in G_\cP\}|,$ and note that $N(x)$ depends only on $\ideal x.$
From \eqref{eq:group} we obtain $N(x)=(q-1)^{|M(I)|}q^{|I\backslash M(I)|},$ where $I=\ideal x,$
and
  $$
    v_\alpha=N(x)|\tilde I|.
  $$
The eigenvalues are found from \eqref{eq:PQ} assuming that $\cA^\ast$ is realized
on the metric space $X^\bot=(\ff_q^n,d_{\cP^\bot}),$ i.e., that $\cP$ is self-dual. 
\begin{proposition}
Let $\alpha,\beta\in \cX$ be orbits. For $x\in \beta$ let $\tilde I$ be the 
orbit of $I=\ideal x$ under $Aut(\cP).$ Then
  \begin{equation}\label{eq:PP1}
   P_{\alpha\beta}=\sum_{I\in\tilde I: (I\cap J)\subset M(I)} (-1)^{|M(I)\cap J|}
   q^{|I\backslash M(I)|}(q-1)^{|M(I)\backslash(M(I)\cap J)|},
   \end{equation}
   where $J=\ideal\chi\in\cI(P^\bot)$ and $\chi\in N_\alpha^\ast$ is any fixed character. 
(Observe that the term $q^{|I\backslash M(I)|}$ is the same for all $I\in\tilde I$.)
\end{proposition}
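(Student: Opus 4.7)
The starting point is the eigenvalue formula from property (D2),
\[
P_{\alpha\beta}=\sum_{x\in N_\beta}\chi(x),
\]
valid for any fixed $\chi\in N_\alpha^*$. Under the standing self-duality assumption, Theorem~\ref{thm:duality} gives $\cA^*\cong\cA^\bot$. Writing $\chi(x)=\psi(x\cdot y)$ via the canonical isomorphism $X\cong X^*$, with $\psi$ a fixed nontrivial additive character of $\ff_q$, this identifies $y$ as lying in the $GL_{\cP^\bot}(n)$-orbit realizing $N_\alpha^*$. In particular $\ideal y_{\cP^\bot}=J$, and since $J$ is generated in $\cP^\bot$ by its $\cP^\bot$-maximal elements, we have the key containment $M_{\cP^\bot}(J)\subseteq\supp(y)\subseteq J$.

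Next, decompose the orbit $N_\beta=\bigsqcup_{I'\in\tilde I}\{x:\ideal x=I'\}$; this follows from the structure of $GL_\cP(n)$ in \eqref{eq:group} and the construction in the proof of Proposition~\ref{prop:extension}, since $G_\cP$ acts transitively on each set $\{x:\ideal x=I'\}$ while $Aut(\cP)$ permutes the ideals in $\tilde I$. A vector $x$ with $\ideal x=I'$ is parameterized by $x_i\in\ff_q^*$ for $i\in M(I')$, $x_i\in\ff_q$ arbitrary for $i\in I'\setminus M(I')$ and $x_i=0$ elsewhere, so
\[
\sum_{x:\ideal x=I'}\chi(x)=\prod_{i\in M(I')}\Bigl(\sum_{x_i\in\ff_q^*}\psi(x_iy_i)\Bigr)\prod_{i\in I'\setminus M(I')}\Bigl(\sum_{x_i\in\ff_q}\psi(x_iy_i)\Bigr).
\]
Standard orthogonality evaluates each factor: the $\ff_q$-sum equals $q$ or $0$ according as $y_i=0$ or not, and the $\ff_q^*$-sum equals $q-1$ or $-1$ accordingly. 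Hence the product vanishes unless $(I'\setminus M(I'))\cap\supp(y)=\emptyset$, in which case it equals $q^{|I'\setminus M(I')|}(-1)^{|M(I')\cap\supp(y)|}(q-1)^{|M(I')\setminus\supp(y)|}$.

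The final step rewrites $\supp(y)$ in terms of $J$ via a short combinatorial lemma: for each $I'\in\tilde I$, the vanishing condition $(I'\setminus M(I'))\cap\supp(y)=\emptyset$ is equivalent to $I'\cap J\subseteq M(I')$, and when either holds, $M(I')\cap\supp(y)=M(I')\cap J$. Both claims follow from one contrapositive argument: if $i\in J\setminus\supp(y)$, then $i\in J\setminus M_{\cP^\bot}(J)$, so $i$ is not $\cP^\bot$-maximal in $J$; hence some $j\in M_{\cP^\bot}(J)\subseteq\supp(y)$ satisfies $j\prec_\cP i$, and when additionally $i\in I'$ (respectively $i\in M(I')$), we get $j\in I'\setminus M(I')$, producing a witness to nonemptiness of the relevant intersection. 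Substituting these identities and summing over $I'\in\tilde I$ yields \eqref{eq:PP1}; the parenthetical observation that $q^{|I\setminus M(I)|}$ is constant on $\tilde I$ is immediate since $|I|$ and $|M(I)|$ are $Aut(\cP)$-invariants.

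The main, mild obstacle is precisely this reformulation in terms of $J$ rather than $\supp(y)$: it relies on the containment $M_{\cP^\bot}(J)\subseteq\supp(y)$ that comes from identifying $\cA^*$ with the $\cP^\bot$-scheme via Theorem~\ref{thm:duality}. Everything else is routine Gauss-sum-type bookkeeping.
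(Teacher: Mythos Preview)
Your argument is correct and follows essentially the same route as the paper: start from the formula in (D2), split $N_\beta$ into fibers over ideals $I'\in\tilde I$, factor the character sum coordinatewise, and evaluate each factor by orthogonality. In fact your treatment is slightly more careful than the paper's, which passes silently from $\supp(y)$ to $J$; your combinatorial lemma showing that $(I'\setminus M(I'))\cap\supp(y)=\emptyset\iff I'\cap J\subseteq M(I')$ and that $M(I')\cap\supp(y)=M(I')\cap J$ under this condition (via $M_{\cP^\bot}(J)\subseteq\supp(y)$) supplies exactly the justification the paper omits.
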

\begin{proof} Let $y\in \ff_q^n$ be the image of $\chi$ under the isomorphism $X\cong X^\ast.$
Then we can think of $J$ as of the ideal $\ideal y _{\cP^\bot}.$  
Consider the sum \eqref{eq:PQ}:
  \begin{align}\label{eq:PP}
    P_{\alpha\beta}&=\sum_{x\in N_\beta} \chi(x)=\sum_{I\in\tilde I}\sum_{\begin{substack}{x: \ideal x=I\\x=(x_1,\dots,x_n)}
    \end{substack}}\prod_{i=1}^n \chi_{y_i}(x_i).
%    \sum_{x=(x_1,\dots,x_n)\in N_\beta} \prod_{i=1}^n \chi_{y_i}(x_i).
  \end{align}
Now observe that the terms with $x_i=0$ or $y_i=0$ contribute 1 in the product, 
so this sum is controlled by the intersection of the sets $I\cap J$ as subsets of $\n.$ 
Moreover, if $y_i\ne 0,$ then $\sum_{x_i\in \ff_q}\chi_{y_i}(x_i)=0.$ Thus, on account of \eqref{eq:group},
the nontrivial contribution to the sum on $x$ arises from the indices $i\in M(I)\cap J$ because in this
case $x\in \ff_q^\ast.$ Continuing from \eqref{eq:PP}, we obtain
   \begin{align}\label{eq:Pab}
   P_{\alpha\beta}&=\sum_{I\in\tilde I: (I\cap J)\subset M(I)}\prod_{i=1}^n \sum_{x_i} \chi_{y_i}(x_i).
  \end{align}
  If  $i\in (I\cap J),$ then $\sum_{x_i}\chi_{y_i}(x_i)=-1.$ If $i\not\in(I\cap J),$ then $y_i=0$ and
  $$
  \sum_{x_i}\chi_{y_i}(x_i)=
  \begin{cases} q-1 &\text{if }i\in  M(I)\backslash (I\cap J)\\
  q &\text{if }i\in I\backslash M(I).
  \end{cases}
  $$ 
  Substitution of these results into \eqref{eq:Pab} completes the proof.
\end{proof}  
   
   In the self-dual case, $Q=P$ and $m_\alpha=v_\alpha$ for all $\alpha.$ Generally, the first and second eigenvalues are connected by the
well-known expression \citep{bro89}, Lemma 2.2.1(iv)
  $$
  m_\alpha P_{\alpha\beta}=v_\beta Q_{\beta\alpha}.
  $$
Let $\cC$ and $\cC^\ast$ be a pair of dual codes. The inner distributions are defined as 
vectors $a=(a_\alpha,\alpha\in \cX), a_\alpha=|\cC\cap N_\alpha|$ for all $\alpha$, and 
$a'=(a'_\alpha,\alpha\in\cX^\ast), a'_\alpha=|\cC^\ast\cap N_\alpha^\ast|$ for all $\alpha'.$
They are related by the MacWilliams equations \citep{del73}
  \begin{equation}\label{eq:mw}
   a'=\frac1{|\cC|} aQ, \;a=\frac{|\cC|}{q^n}a'P.
  \end{equation}
Let us give one example.

\vspace*{.1in}{\em Example 4 (continued):} Consider again the NRT poset on $\n, n=mr.$ The orbit of a vector $x$ is formed
of all vectors with a fixed shape \eqref{eq:shape}, where $I=\ideal x.$ We have
   $$
   v_e=\binom{m}{e_1,e_2,\dots,e_r} (q-1)^{\sum_{i=1}^r (i-1)e_i}q^{\sum_{i=1}^r e_i},
   $$
where $\binom{m}{e_1,e_2,\dots,e_r}$ is the number of ways of choosing $e_i$ subsets of size $i=1,\dots,r$ out of
an $m$-set. The eigenvalues can be found from \eqref{eq:PP1} without difficulty. It is known \citep{bro89}, Lemma 2.2.1(iv) that the eigenvalues satisfy orthogonality
relations with weight $v_e.$ This enables us to interpret the eigenvalues of the ordered Hamming scheme
as $r$-variate orthogonal polynomials that belong to the family of multivariate Krawtchouk polynomials.
This approach is further developed in \citep{bar09b}.

\vspace*{.1in}Let us summarize our considerations of duality of codes and association schemes on poset metric spaces. 
First, there is no explicit need to realize $\cA^\ast$ on the dual poset.
At the same time, if this can be done (in the case of self-duality), then both the code and
its dual code can be visualized on the same Hasse diagram, which is convenient for their study.
This explains why many previous studies that involved MacWilliams theorems for poset metrics
\citep{gut98,mar99,dou02,bar09b}
dealt with self-dual posets. In such cases, the dual scheme $\cA^\ast$ is naturally identified 
with $\cA^\bot,$ which explains the switch of the ordering of the coordinates.

Next, if the poset is not self-dual, then duality of association schemes may have nothing
to do with linear-algebraic duality of codes. In this case we still can derive MacWilliams-like
relations between $\cC$ and $\cC^\bot$, but they do not fit the original context of association schemes
expressed by \eqref{eq:mw}. This approach is taken in a recent work by Choi et al.~\citep{choi12} whose main purpose
is to obtain such relations. Derivations in \citep{choi12} still rely on characters, but those
do not follow the structure of the dual scheme $\cA^\ast.$ 

Note also the case of the hierarchical poset (\citep{kim05a,kim05b,kim07}, \citep{pinheiro12}), 
which generally is not self-dual. The association schemes $\cA$ and $\cA^\perp$ are not 
isomorphic, but the inner
distributions nevertheless are compactly described in terms of the poset weight.

Our discussion can be summarized in the observation
that in the non-self-dual case, MacWilliams-type relations generally come in two different, inequivalent forms related to the association schemes $\cA^\ast$ and $\cA^\bot.$

\section{Shapes of codevectors}
MacWilliams relations are written with respect to the distribution of codevectors across the orbits of
the group of linear isometries of the space \eqref{eq:mw}.
For instance, for the Hamming metric, two vectors have the same weight if and only if
they are in the same orbit. In this context, the weight is a numerical invariant of vectors that characterizes the orbits.
Generally, we call such a numerical invariant the \emph{shape} of a vector.

\begin{definition}
Let $(\ff_q^n,d_\cP)$ be a poset metric space. A mapping $s:\ff_q^n\to \integers^m$ is called
a \emph{shape} mapping if it is constant on the orbits of $T\in GL_{\cP}(n)$.
%if and only if there is a linear isometry $T\in GL_{\cP}(n)$ such that $T(x)=y.$ 
The value that this mapping takes on the orbit of a vector $x\in \ff_q^n$ is called the shape of $x.$
\end{definition}

Shapes of vectors are known for only a small number of posets: hierarchical
posets (see \citep{felix}), in which case they are given by the poset weight, and NRT-posets; see \eqref{eq:shape}.
The structure of the isometry group \eqref{eq:group} suggests that shapes are determined by 
order ideals rather than vectors. This is the case for all the known examples.
In particular, if $\cP$ has the IE-property, then the shape mapping depends only on the isomorphism
class of ideals in the sense that $\shape(x)=\shape(y)$ if and only if $\ideal x\sim\ideal y.$
Generally, the shape is difficult to determine, and we believe that there is no possibility 
of finding a general expression for it.
Moreover, the shapes in the known examples not only determine the orbits, but also
other important invariants such as the weight of the vector and the packing
radius of the $1$-dimensional subspace generated by such vector. If shapes are
rare, then such useful shapes, that can determine other invariants, are much more so.

Clearly, in the general case the shape is not uniquely defined. Moreover, generally it is
difficult to check whether two vectors belong to the same orbit. We note that even in
simple cases of the Hamming space and the single chain the order of the isometry
group is exponential in $n.$ In these cases, as well as in the case of a regular tree
checking whether two vectors are isometric is easy. To be able to use shapes in the
study of structural properties of codes in a poset space, we generally would like to
be able to compute and compare shapes in time proportional to $\log|GL_{\cP}(n)|.$

\remove{
Clearly, the shape is not uniquely defined but we should explain what a reasonable shape should be in terms of algorithm complexity. Generally, direct (greedy) verification whether $x,y\in \ff_q^n$ belongs to the same orbit demands up to $ \left| GL_{\cP}(n) \right|$ checks: for each $T\in GL_{\cP}(n) $ one should check whether $T(x)=y$ or $T(x)\neq y$. Since in general the cardinality of $GL_{\cP}(n) $ grows exponentially with $n$ (for the Hamming case we have $ \left| GL_{\cP}(n) \right|=(q-1)^n n!$ and for a single chain we have $ \left| GL_{\cP}(n) \right|=(q-1)^nq^{n(n-1)/2}$ ), we wish that a suitable shape should demand much less check-operations. In all cases where a shape is available, those are very suitable: both for an hierarchical poset (where the shape is the weight of a vector) and for ordered Hamming orders (NRT spaces) the shape demands a number of check-operations that grows linearly with $n$. In general, if we denote by $DV(n)$ the (maximal) number of operations needed by the direct verification (greedy) method and by $SV(n)$ the (maximal) number of operations needed to compute and compare shapes, then we should demand, at least, that $\lim_{n\rightarrow \infty} SV(n)/DV(n) =0$ and at most that $\lim_{n\rightarrow \infty} \ln \left(DV(n)\right)/ SV(n) >0 $.}

\subsection{Codes on trees}
In this section we consider metric spaces $X=(\ff_q^n, \cP),$ where $\cP$ belongs to 
a class of posets whose Hasse diagrams are level-regular rooted
trees, introduced in Example 5 above. 

The group of linear isometries of $X$ is given in \eqref{eq:group}. It is a semidirect 
product of $Aut(\cP)$ and the matrix group $G_\cP,$
where $Aut(\cP)=S_{d_0}\times S_{d_0d_1}\times \dots\times S_{d_0d_1\dots d_{m-2}}.$
Proposition \ref{prop:lt} implies that shapes of vectors are determined by ideals, 
i.e., $x$ and $y$ are
in the same orbit of $GL_{\cP}(n)$ if and only if $\ideal x\sim\ideal y.$ Thus, orbits 
are characterized
by equivalence classes of ideals $\tilde I, I\in\cI(\cP).$ Ideals of $\cP$ themselves are
rooted trees (not necessarily regular), and their isomorphisms are obtained by 
restricting the isomorphisms of $\cP.$
We conclude that shapes of vectors in $X$ will be determined if we find suitable numerical 
invariants of rooted trees that encode them up to isomorphism.

Isomorphism of rooted trees is a classical problem in computer science \citep{Reed72},\citep{aho74}. 
There are many ways to encode a tree into a number so that it is possible
to decode that number to an isomorphic tree (a representative of the same
equivalence class). One way is as follows \citep{Reed72}. Let $H$ be a tree of height $h.$
Suppose all the vertices $i$ with $l(i)=k+1\le h$ are assigned labels, written as binary
strings. Vertex $j$ with $l(j)=k$ is given a label based on the labels of its sons.
Suppose that $j$ has $d_{k}$ sons whose labels are $C_1,C_2,\dots,C_k.$
These labels are sorted as binary numbers, so suppose that $C_1\le C_2\le \dots\le C_k,$
where identical labels are placed in arbitrary order.
Then $j$ is labeled as $(0|C_1|C_2|\dots|C_{d_k}|1).$ It remains to say what happens
if the vertex has no sons: in this case it is assigned the label $(01).$ 
The label assigned to the root is the label of the tree. It is a binary word that
can be also interpreted as an integer number. These labels play the role of shapes:
two ideals have the same label if and only if they are isomorphic.
The isometry group $GL_{\cP}(n)$ acts transitively on all the binary
vectors $x$ whose support ideals $\ideal x$ have the same label, and this label
therefore can be used as the shape of the codevector.

The tree is not self-dual, so there is no well-defined duality of linear codes on it.
At the same time, we can construct a self dual poset from a given poset $\cP$ by 
adjoining a mirror image to $\cP$ to $\cP.$ It is therefore of interest to study how the shapes behave under this and other similar operations on posets. We take up this question in the
next section.

\subsection{Operations on posets}
There are several standard ways to create new posets from old. In addition to
poset duality, the well-known operations include direct sums and direct products of posets
as well as ordinal sums and products \citep{Stanley2012}. Suppose that we start
with posets $\cP$ and $\cQ$ that have the IE property. We are interested whether
this property is inherited by posets that arise as a result of combining $\cP$ and $\cQ$,
and in the positive case, what is the shape of codevectors on $\cP\ast\cQ$ given the
shapes on $\cP$ and $\cQ$ (here $\ast$ is a generic notation for the operation).

Let $\cP=(\n,\preceq_{\cP})$ and $\cQ(\fs m,\preceq_{\cQ})$.
be two posets and let $\shape_{\cP}$ and $\shape_{\cQ}$ be their respective shape maps.
Throughout this section $\cP$ and $\cQ$ are assumed to have the IE property.

\subsubsection{Ordinal sum}
Define a poset $P\oplus Q$ on $\fs{m+n}$ in
the following way. Given $i,j\in\fs{m+n},$ put 
   $$
i\preceq_{\oplus}j\iff
 \left\{\begin{array}
[c]{c}
i,j\leq n\text{ and }i\preceq_{\cP}j\text{ or}\\
i,j>n\text{ and } (  i-n )  \preceq_{\cQ} (  j-n )  \text{
or}\\
i\leq n\leq j
\end{array} \right\}.
   $$
We note that the hierarchical poset is an ordinal sum of several antichains.
Also, ordinal sum gives us a way of making self-dual posets out of other posets.
It turns out that of the operations considered in this section,
it is only ordinal sum that allows the IE property to be carried to the resulting
poset from the component ones. 
\begin{lemma}
If both $\cP$ and $\cQ$ satisfy the IE-property, then so does $\cP\oplus\cQ.$
\end{lemma}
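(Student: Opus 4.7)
The plan is to analyze ideals of $\cP\oplus\cQ$ according to how they meet the bottom copy $\n$. Every $I\in\cI(\cP\oplus\cQ)$ falls into one of two types: either $I\subseteq\n$, in which case $I\in\cI(\cP)$ (call this type (a)), or $\n\subseteq I$, in which case $I=\n\cup\{n+j:j\in J\}$ for a unique $J\in\cI(\cQ)$ (type (b)). This dichotomy is forced by the defining relation of the ordinal sum: any element of $\{n+1,\dots,n+m\}$ has all of $\n$ below it, so as soon as $I$ contains such an element it must contain all of $\n$.

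The heart of the argument is a down-set criterion. In any ordinal sum $\cR_1\oplus\cR_2$ with $|\cR_1|=k$, an element $r$ lies in $\cR_1$ if and only if $|\!\!\downarrow\!\! r|\le k$, since for $r\in\cR_1$ the down-set is confined to $\cR_1$ while for $r\in\cR_2$ it contains all of $\cR_1$ together with $r$ itself. I intend to apply this in two places: first to $\cP\oplus\cQ$ itself, which yields the splitting $Aut(\cP\oplus\cQ)=Aut(\cP)\times Aut(\cQ)$ and shows that every global automorphism preserves the $\cP$-stratum and the $\cQ$-stratum separately; second to each type-(b) ideal $I_k=\n\cup J_k'$, regarded as the ordinal sum $\cP\oplus(\cQ|_{J_k})$, to conclude that any poset isomorphism $\phi:I_1\to I_2$ between type-(b) ideals must satisfy $\phi(\n)=\n$, since $\phi$ preserves down-set cardinalities and the threshold $n$ is common to both ideals.

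With these tools in hand, let $I_1\sim I_2$ be isomorphic ideals and note that cardinality forces them to share a type: type (a) obeys $|I|\le n$ and type (b) obeys $|I|\ge n$, with equality on either side giving $I=\n$. In Case A both ideals lie in $\cI(\cP)$ and are isomorphic as posets, so the IE property of $\cP$ supplies $\sigma\in Aut(\cP)$ with $\sigma(I_1)=I_2$; extending $\sigma$ by the identity on $\{n+1,\dots,n+m\}$ produces the required automorphism of $\cP\oplus\cQ$. In Case B, write $I_k=\n\cup J_k'$; the second application of the down-set criterion restricts the given isomorphism to an isomorphism $J_1\to J_2$ of ideals of $\cQ$, whereupon the IE property of $\cQ$ furnishes $\tau\in Aut(\cQ)$ with $\tau(J_1)=J_2$, and pairing $\tau$ with the identity on $\n$ through the above splitting of $Aut(\cP\oplus\cQ)$ completes the construction.

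I expect the main subtlety to be confirming that the ``$\n$ versus shifted $\cQ$'' decomposition of a type-(b) ideal is invariant under arbitrary order isomorphisms, since an abstract poset that happens to be an ordinal sum can in principle admit more than one ordinal decomposition. The down-set criterion resolves this issue cleanly, because the numerical threshold $n=|\cP|$ is intrinsic to the common $\cP$-part of both ideals and is preserved by any poset isomorphism.
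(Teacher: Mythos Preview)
Your proof is correct and follows essentially the same two-case decomposition as the paper: split ideals of $\cP\oplus\cQ$ according to whether they meet the $\cQ$-part, invoke the IE property of the appropriate component, and extend the resulting automorphism by the identity on the other component. The only substantive difference is that the paper simply asserts ``naturally, $\phi(I_n)=J_n$ and $\phi(I^m)=J^m$'' for an isomorphism $\phi$ between type-(b) ideals, whereas you supply an explicit justification via the down-set cardinality criterion; this makes your version slightly more rigorous but does not change the argument's structure.
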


\begin{proof}
Given a subset $Y\subset\fs{n+m},$ consider the following sets:
\begin{align*}
    Y_{n}  &  =\{i\in\fs{m+n}: i\in Y\text{ and }i\le n\}\\
    Y^m &=\{i\in\fs{m+n}: i>n\}\\
   Y_m  &=Y^m-n:=\{i-n:i\in Y^m\}.
\end{align*}
Let us consider $I\in\mathcal{I} (  \cP\oplus \cQ ).$ It is immediate to
realize that $I^{m}\in\cI(\cQ)$ and $I_{n}\in \cI(\cP),$ and 
if $I_{m}\neq\emptyset$, then $I_{n}=\n$. Conversely, if
$\emptyset\neq I\in\cI(\cQ),$ then $\n \cup \{i+n\mid i\in I\}$ is
an ideal in $P\oplus Q,$ and if $I\in\cI(\cP),$ then, viewed as a subset of $\fs{n+m},$ 
it is also an ideal.

Assume that $I,J\in\cI(\cP\oplus \cQ)$ and suppose there is a poset 
isomorphism $\phi:I \to J$. Naturally, we have
that $\phi(I_{n})  =J_{n}$ and $\phi (I^{m} )=J^{m}$.

Suppose $I_{m}\neq\emptyset$ (and hence also $J_{m}\neq\emptyset$), 
then $I_{n}=J_{n}= [  n ]  $. Now note that $I_{m}$ and $J_{m}$ as
subsets of $\fs m$ are ideals in $\cQ$
and the map $\overline{\phi}:I_{m} \to J_{m}$ defined by $\overline
{\phi} (  i )  =\phi (  i+n )  -n$ is a poset isomorphism
between $I_{m}$ and $J_{m}$. By the IE-property of $\cQ$, there is $\xi\in
Aut (  Q )  $ such that $\xi (  I_{m} )  =J_{m}$. 
We define the map $\tilde{\phi}:\fs{n+m}\to \fs{n+m}$ as
follows:%
\begin{align*}
\tilde{\phi} (  i )   &  =\xi (  i-n )  +n\text{ for }i>n\\
\tilde{\phi} (  i )   &  =i\text{ for }i\leq n\text{.}%
\end{align*}
Than we have that $\tilde{\phi}\in Aut (  P\oplus Q )  $ and
$\tilde{\phi} (  I )  =J$.

Suppose now that $I_{m}=J_{m}=\emptyset$. Then we may view $I_{n}$ and
$J_{n}$ as isomorphic ideals in $\n\subset\fs{n+m}.$ 
By the extension property of $\cP$, there is $\xi\in Aut (\cP )$
such that $\xi(I_{n})=J_{n}$. Define the map 
$\tilde{\phi}:\fs{n+m}\to\fs{n+m}$ as follows:
\begin{align*}
\tilde{\phi} (  i )   &  =i\text{ for }i>n\\
\tilde{\phi} (  i )   &  =\xi (  i )  \text{ for }i\leq
n\text{.}%
\end{align*}
Then we have that $\tilde{\phi}\in Aut (  P\oplus Q )  $ and
$\tilde{\phi} (  I )  =J$.
\end{proof}

This lemma implies that, once shapes of vectors are defined both on $\cP$ and $\cQ,$
then in order to define a shape on $(\ff_q^{n+m},d_{\cP\oplus \cQ})$ 
it suffices to define it on $\cI(\cP\oplus\cQ).$ To simplify notation, we
denote a shape on $P\oplus Q$ -shape by $\mathrm{shape}_{\oplus}$. 
\begin{proposition} Let $I\in\cI(\cP\oplus\cQ).$ Then the following mapping
  $$
\mathrm{shape}_{\oplus} (  I )  =\begin{cases}
 (0,\mathrm{shape}_{\cQ}(I_m))  &\text{ if }
I_{m}\neq\emptyset\\
( 1,\mathrm{shape}_{\cP}(I_n))  &\text{ if }%
I_{m}=\emptyset
     \end{cases}
  $$
is a shape on $\cP\oplus \cQ$.
\end{proposition}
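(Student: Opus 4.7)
The plan is to reduce the claim to a statement about isomorphism classes of ideals and then track how an ideal isomorphism interacts with the decomposition $I=I_n\sqcup I^m$ introduced in the preceding lemma. Because $\cP$ and $\cQ$ both satisfy the IE-property, that lemma shows $\cP\oplus\cQ$ does as well, so Proposition \ref{prop:extension} yields that $x,y\in\ff_q^{n+m}$ lie in the same orbit of $GL_{\cP\oplus\cQ}(n+m)$ if and only if $\ideal x\sim\ideal y$ in $\cP\oplus\cQ$. Hence it suffices to show that isomorphic ideals $I\sim J$ of $\cP\oplus\cQ$ give the same value of $\mathrm{shape}_\oplus$.

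The central intermediate claim I would establish is that every ideal isomorphism $\phi:I\to J$ satisfies $\phi(I_n)=J_n$ and $\phi(I^m)=J^m$. To prove this I would use the level function: because $I$ is an ideal, every maximum-length chain ending at $a\in I$ is contained in $I$, so the level $l(a)$ computed inside $I$ equals that computed in $\cP\oplus\cQ$, and the same holds for $J$. Any order isomorphism preserves level, so $l(\phi(a))=l(a)$ for all $a\in I$. In $\cP\oplus\cQ$ the elements of $\{1,\dots,n\}$ have levels at most $h_\cP:=\max_{i\leq n} l(i)$, while every element $i>n$ admits a chain in $\cP\oplus\cQ$ obtained by concatenating a longest chain of $\cP$ with a chain of $\cQ$ ending at $i-n$, so $l(i)\geq h_\cP+1$; hence the threshold $h_\cP$ cleanly separates the two halves and $\phi$ respects the decomposition.

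Once the splitting is in hand, the proof divides into the two cases of the definition of $\mathrm{shape}_\oplus$. If $I_m\neq\emptyset$, then by the splitting $J_m\neq\emptyset$ as well, and the map $\bar\phi:I_m\to J_m$ given by $\bar\phi(i):=\phi(i+n)-n$ is a $\cQ$-isomorphism of ideals. The IE-property of $\cQ$ together with the fact that $\mathrm{shape}_\cQ$ is constant on $GL_\cQ(m)$-orbits then forces $\mathrm{shape}_\cQ(I_m)=\mathrm{shape}_\cQ(J_m)$, so both $\mathrm{shape}_\oplus(I)$ and $\mathrm{shape}_\oplus(J)$ equal $(0,\mathrm{shape}_\cQ(I_m))$. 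If instead $I_m=\emptyset$, the splitting forces $J_m=\emptyset$, and $\phi:I_n\to J_n$ is a $\cP$-isomorphism (since $\preceq_\oplus$ agrees with $\preceq_\cP$ on $\n$), so analogously $\mathrm{shape}_\cP(I_n)=\mathrm{shape}_\cP(J_n)$ and both shapes equal $(1,\mathrm{shape}_\cP(I_n))$.

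I expect the one genuinely subtle point to be the central intermediate claim: one must verify that the partition $I=I_n\sqcup I^m$, which is defined through the labels inherited from $\fs{n+m}$, is actually an invariant of the abstract isomorphism type of $I$ rather than of its particular embedding. The level argument handles this once one observes the coincidence of the level of an element within an ideal with its level in the ambient poset, a direct consequence of the downward-closure of ideals. The rest of the argument is bookkeeping, invoking the hypothesis that $\mathrm{shape}_\cP$ and $\mathrm{shape}_\cQ$ are already shapes on their respective posets.
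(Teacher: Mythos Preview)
Your argument for the direction you prove is correct and in fact more careful than the paper's: the paper simply asserts ``naturally, $\phi(I_n)=J_n$ and $\phi(I^m)=J^m$,'' while your level argument supplies a clean justification of this splitting.

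However, you only establish one implication, namely that $I\sim J$ forces $\mathrm{shape}_\oplus(I)=\mathrm{shape}_\oplus(J)$. The paper's proof establishes the biconditional: it also shows that equal shapes force $I\sim J$. Strictly by the formal Definition of a shape mapping (``constant on the orbits''), your direction alone suffices, so your proof is technically complete for the statement as literally written. But the entire purpose of shapes in this section is to \emph{determine} orbits, not merely to be invariant on them; the surrounding text repeatedly uses ``shape'' to mean a complete invariant (e.g., ``$\shape(x)=\shape(y)$ if and only if $\ideal x\sim\ideal y$''), and the proposition is only interesting if $\mathrm{shape}_\oplus$ separates non-isomorphic ideals. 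The converse is not hard---if the first coordinates agree one reads off $\mathrm{shape}_\cQ(I_m)=\mathrm{shape}_\cQ(J_m)$ or $\mathrm{shape}_\cP(I_n)=\mathrm{shape}_\cP(J_n)$, invokes that $\mathrm{shape}_\cP,\mathrm{shape}_\cQ$ are themselves complete invariants to get $I_m\sim J_m$ or $I_n\sim J_n$, and then glues the isomorphism back to one of $I$ with $J$ exactly as in the preceding lemma---but you should include it.
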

\begin{proof}
Let $I,J\in\mathcal{I} (  \cP\oplus \cQ )  $ be two ideals, 
and suppose that $\mathrm{shape}_{\oplus} (  I )  =\mathrm{shape}%
_{\oplus} (  J )  $. Suppose first that $I_{m}\neq\emptyset$, or
equivalently, that $\mathrm{shape}_{\oplus} (  I )  = (
0,\mathrm{shape}_{\cQ} (  I_{m} )   )  $. 
Then we must have $\shape_\oplus(J)=(0,\shape_\cQ(J_m))$
and thus $\mathrm{shape}_{\cQ}(I_m)=\shape_\cQ(J_m).$ It follows that
$I_{m}$ and $J_{m}$ are isomorphic as posets, let us say by an isomorphism $\phi$. Since
$I_{m}\neq\emptyset$ we have that $I_{n}=J_{n}=\n;$ hence
$I=\n \cup I^{m}$ and $J=\n \cup J^m.$ 
Define $\tilde{\phi}:I \to J$ by
   $$
   \tilde\phi(i)=\begin{cases} i&\text{if } i\le n\\
   \phi(i-n)+n &\text{if }i>n
 \end{cases}
  $$
and notice that it is an isomorphism of posets, i.e., $I\sim J$. 
Now suppose that $I_{m}=\emptyset.$ Then
  $$
   (1,\mathrm{shape}_{\cP}(I_n))=\mathrm{shape}_\cP(I)=\mathrm{shape}_\cP(J_n)=
(1,\mathrm{shape}_\cP(J_n)).
   $$
From the IE-property of $\cP$ we obtain that $I_{n}\sim J_{n},$
and since $I=I_{n}$, $J=J_{n},$ we conclude that $I\sim J$. 

Conversely, let us assume that $I\sim J$. Then we must have
$I_{n}\sim J_{n}$ and $I^{m}\sim J^{m}$. Suppose that $I_{m}=\emptyset$. The
IE-property ensures that $\mathrm{shape}_{\cP} (  I_{n} )
=\mathrm{shape}_{\cP} (  J_{n} ),$ and hence
  $$
\mathrm{shape}_{\oplus} (  I )  = (  1,\mathrm{shape}_{P} (
I_{n} )   )  = (  1,\mathrm{shape}_{P} (J_{n} )
 )  =\mathrm{shape}_{\oplus} (  J ).
   $$
Now suppose that $I_{m}\neq\emptyset,$ then
   $$
I=\n  \cup I^{m},\;J=\n  \cup J^{m}%
   $$
and the isomorphism $\phi:I \to J$ maps $I^{m}$ to $J^{m}$. It follows
that the map $\tilde{\phi}:I_{m} \to J_{m}$ defined by $\tilde{\phi
} (  i )  =\phi (  i+n )  -n$ is a poset isomorphism and by
the IE-property we have that $\mathrm{shape}_{Q} (  I_{m} )
=\mathrm{shape}_{Q} (  I_{m} ),$ so that
\[
\mathrm{shape}_{\oplus} (  I )  = (  0,\mathrm{shape}_{Q} (
I_{m} )   )  = (  0,\mathrm{shape}_{Q} (  J_{m} )
 )  =\mathrm{shape}_{\oplus} (  J ).
\]
\end{proof}

\subsubsection{Direct sum}
Given $\cP$ and $\cQ,$ the direct sum operation results in a poset $\cP+\cQ$
in which the diagrams of $\cP$ and $\cQ$ are drawn  ``side by side.''
Namely, given $i,j\in \fs{n+m}$ we have
   $$
i\preceq_{\oplus}j\iff%
\left\{\begin{array}
[c]{c}%
i,j\leq n\text{ and }i\preceq_{\cP}j\text{ or}\\
i,j>n\text{ and } (  i-n )  \preceq_{\cQ} (  j-n )
\end{array}\right\}.
   $$
This poset does not inherits the IE property from $\cP$ and $\cQ$. To give 
a simple example, suppose that $\cP$ and $\cQ$ are not isomorphic. Then we pick
a $\cP$-minimal element $i\in\n$ and a $\cQ$-minimal element
$j\in\fs m$. The sets $\{i\}$ and $\{j+n\}$ are trivially isomorphic as ideals 
in $P\oplus Q,$ but there is no poset automorphism on 
$\cP\oplus \cQ$ that would exchange them. 

\subsubsection{Ordinal product}

Given posets $\cP= (\n,\preceq_{\cP} )  $ and
   $
   \cQ= ( \fs m ,\preceq_{\cQ} )  $, the poset $\cP\otimes
   \cQ= (\n\times\fs m  ,\preceq_{\otimes} )
   $ 
is defined by the relations
   $$
 (  i,j )  \preceq_{\otimes} (  i^{\prime},j^{\prime} )
\iff i=i^{\prime}\text{ and }j\preceq_{Q}j^{\prime}\text{.}%
   $$
The IE property is not inherited from $\cP$ and $\cQ$. Indeed,
suppose $\cP$ is a poset on $\{1,2\}$ defined by the
relations $i\preceq_{\cP}i,i=1,2.$ Suppose that $\cQ$ is a hierarchical poset
having at least two minimal elements, let us say $i$ and $j,$ 
and assume that they are not maximal, i.e., there is $k\in \fs m$ such that
$i\preceq_{\cQ}k$ and $j\preceq_{\cQ}k$. Then the sets 
$ \{(1,i ),(1,j )\}$ and $\{(1,i),(2,i)\}$ are ideals 
(since $1,2,i,j$ are all minimal elements), 
and since neither $i$ and $j$ nor $1$ and $2$ are comparable, those
ideals are isomorphic. At the same time, 
there is no $\phi\in Aut(\cP\otimes \cQ )$ that takes 
$\{(1,i),(1,j)\}$ to $\{(1,i),(2,i)\}.$ Indeed, $(1,i)\preceq_{\otimes} (1,k)$ 
and $(1,j)\preceq_{\otimes}(1,k),$ but there is no $(x,y)$ such that 
$(1,i)\preceq_{\otimes} (x,y)$ and $(2,i)\preceq_{\otimes}(x,y)$
since $(1,i)\preceq_{\otimes} (x,y)$ implies that $x=1$. 

\subsubsection{Direct product}

Given posets $\cP= (\n,\preceq_{\cP} )  $ and
$\cQ= (\fs m,\preceq_{\cQ} )  $, the poset $\cP\times
\cQ= (\n \times \fs m  ,\preceq_{\times} )$ is defined by the relation
\[
 (  i,j )  \preceq_{\times} (  i^{\prime},j^{\prime} )  \iff
i\preceq_{\cP}i^{\prime}\text{ and }j\preceq_{\cQ}j^{\prime}\text{.}%
\]
It is easy to see that a subset $I\times J\subset \n\times \fs m$
is an ideal of $\cP\times \cQ$ if and only if $I\in\cI(\cP)$ and $J\in\cI(\cQ).$
Moreover, $\phi= (  \phi_{\cP},\phi_{\cQ} )  \in Aut ( \cP\times\cQ)$ 
if $\phi_{\cP}\in Aut ( \cP )  $ and $\phi_{\cQ}\in Aut (\cQ )  $. 
At the same time, clearly not any $\phi\in Aut (\cP\times\cQ)$ can be expressed 
in such a way. It follows that the extension property does not necessarily 
hold on $\cP\times \cQ$, as can be seen in the following example.

\vspace*{.1in}
{\em Example:}
Let $\cP= (\fs2  ,\preceq_{\cP})$ and
$Q= (\fs3  ,\preceq_{\cQ} )  $ be defined by the
relations $1\prec_{\cP}2,1\prec_{\cQ}2,$  and $1\prec_{\cQ}3$. Then $\cP\times \cQ$
is generated by the relations
\begin{align*}
 (  1,1 )& \preceq_{\times} (  2,1 )  ;\; (
1,1 )  \preceq_{\times} (  1,2 )  ; (  1,1 )
\preceq_{\times} (  1,3 )  \\
 (  2,1 )    & \preceq_{\times} (  2,2 )  ; (
1,2 )  \preceq_{\times} (  2,2 )  ; (  1,2 )
\preceq_{\times} (  2,3 )  ; (  1,3 )  \preceq_{\times
} (  2,3 )  \text{.}
\end{align*}
The sets
\[
I= \{   (  1,1 )  , (  1,2 )   \}  ,J= \{
 (  1,1 )  , (  2,1 )   \}
\]
are both  ideals in $\cP\times \cQ$ that are isomorphic as posets, but there
is no $\phi\in Aut (\cP\times \cQ )  $ such that $\phi (I)=J.$

\vspace*{.2in}  
Some of the initial ideas of this paper appeared earlier in the 
extended abstract \citep{Barg12}. Here they are both developed and refined in 
a number of ways.

%% The Appendices part is started with the command \appendix;
%% appendix sections are then done as normal sections
%% \appendix

%% \section{}
%% \label{}

%% References
%%
%% Following citation commands can be used in the body text:
%% Usage of \cite is as follows:
%%   \cite{key}         ==>>  [#]
%%   \cite[chap. 2]{key} ==>> [#, chap. 2]
%%

%% References with bibTeX database:

%\bibliographystyle{elsarticle-num}
%\bibliography{<your-bib-database>}

%% Authors are advised to submit their bibtex database files. They are
%% requested to list a bibtex style file in the manuscript if they do
%% not want to use elsarticle-num.bst.

%% References without bibTeX database:

% \begin{thebibliography}{00}

%% \bibitem must have the following form:
%%   \bibitem{key}...
%%

% \bibitem{}

% \end{thebibliography}

\providecommand{\bysame}{\leavevmode\hbox to3em{\hrulefill}\thinspace}
\providecommand{\href}[2]{#2}

\end{document}